\documentclass[11pt, a4paper, twocolumn, copyright, goog]{google}

\usepackage[authoryear, sort&compress, round]{natbib}
\usepackage{mathtools}
\usepackage{amsmath}
\newtheorem{theorem}{Theorem}[section]

\usepackage{tikz}
\usetikzlibrary{positioning, arrows.meta, shapes.geometric}

\usepackage{xcolor} 

\keywords{AI Alignment, AI Ethics, Values, Preferences, Mathematical Models}
\uselogo{} 

\title{AI Value Alignment for Evolving Social Norms}

\correspondingauthor{nenadt@google.com}

\author[1]{Nenad Toma\v{s}ev}
\author[1]{Matija Franklin}
\author[1]{Simon Osindero}

\affil[1]{Google DeepMind{}{}}

\begin{abstract}
AI alignment is essential for the safe deployment of advanced AI systems. Given that values and preferences change over time, culture, social roles, and context, we need to develop a better understanding of the possible long-term consequences of AI alignment, in particular considering the likely ubiquitous future use of personalized AI assistants. We introduce a flexible and extensible mathematical modelling framework, rooted in social physics, aimed at answering macro-level questions regarding the evolving social norms in human populations under the assumption of frequent AI use. Our analysis is part-analytical, and part-simulation, enabling us to characterize the long-term dynamical consequences under a diverse set of starting assumptions. We highlight the risk of value lock-in, and normative mode collapse, prominently featured in non-adaptive alignment formulations. Beyond alignment, we advocate for the wider adoption of these kinds of social physics models as an epistemic bridge: enabling rapid, rigorous, and quantitatively-grounded hypothesis testing for sociotechnical foresight in general AI futures, and acting as a tractable precursor to more computationally expensive large-scale agentic evaluations.

\end{abstract}

\begin{document}

\maketitle

\section{Introduction}

Alignment~\citep{ji2023ai} in the context of AI safety is often framed as a Principal-Agent problem: aiming to ensure that an AI behavioral policy, corresponding to a specific combination of model weights and scaffolds, consistently maximises the objective function  implicit in the user’s intent or preference~\citep{russell2019human, christian2020alignment}. There are many approaches to AI Alignment, with reinforcement learning from human feedback (RLHF) being arguably the most common approach applied in practice~\citep{ouyang2022training}. Under this formulation, AI alignment is aimed at minimising the divergence between true human preferences and the learned model of those preferences. An idealised, perfectly aligned system would represent a frictionless extension of human will, within safe boundaries. 

Value alignment~\citep{eckersley2018impossibility, gabriel2020artificial, gabriel2022challenge} aims to understand the values behind user preferences, robustly represent them, and steer the AI systems towards the desired model behaviour. These values need to be reflected in AI behaviour in diverse, varying circumstances.

As the predominant use of AI systems transitions to the use of agentic assistants, the value alignment question is becoming increasingly complex and multifaceted. An agent with memory, skills, and accumulated historical conversational traces can continuously adapt its model of the user, and the user's preferences. Alignment in agentic systems may therefore become far more personalised, and subject to continuous adjustment. Our paper engages with this particular scenario, under the assumption of an evolving societal normative environment. We abstract the adherence to user's modelled values under a notion of \emph{alignment strength}, which may also be seen as a kind of an AI influence, as more strongly aligned models are more likely to be trusted, and therefore more likely to shape the user's values.

To effectively model alignment, our framework must account for the fact that values are neither fixed nor universal. For example, societal values have co-evolved with environmental conditions~\citep{morris2015foragers, henrich2010weirdest}, while at the individual level, value priorities can be shaped by the degree of existential security experienced during one's formative years~\citep{inglehart2020cultural} - with scarcity favouring conformity and security, and stability enabling greater emphasis on autonomy and self-expression. Human values will continue to evolve in response to technological, environmental, and social change. 

The primary focus of our modelling framework is to better understand the emerging risk of \emph{value lock-in}, arising from the bi-directional relationship between personal AI assistants and their users. Users steer AI systems, but AI systems, and especially personal AI assistants, may also actively influence the opinions of their users \citep{zhi2025beyond, franklin2022recognising, ashton2022problem}. Given that AI-mediated interfaces are becoming the primary surfaces through which information is presented \citep{marchal2026architecting}, this may become increasingly relevant. 

Indeed, there is mounting evidence for AI's ability to influence users. \citet{jakesch2023co} demonstrate in a randomised experiment that co-writing with opinionated language models measurably shifts users' views on policy issues. \citet{salvi2024conversational} report that GPT-4, when granted access to basic demographic information, was more persuasive than humans 64.4\% of the time. \citet{costello2024durably} show that brief AI-generated dialogues durably reduce deeply held conspiracy beliefs by 20\% on average, with effects persisting for at least two months.

Positive feedback loops have been observed and extensively studied in recommender systems, in relation to echo chambers, filter bubbles, and measurable preference amplification~\citep{pariser2011filter, chaney2018algorithmic, kalimeris2021preference, ashton2022problem}. Personalized AI assistants present a qualitatively different risk, through generating and contextualising information. Autonomous agents introduce further risk surfaces as aligned preferences may inform autonomous actions that map onto real-world outcomes. Such AI agents may acquire hidden incentives to reshape user preferences into more predictable states~\citep{krueger2020hidden}, restricting future user agency, and reinforcing currently held views and values.

The normative case against freezing preferences is independently established in the philosophy of preference change. \citep{paul2014transformative} argues that certain experiences are epistemically and personally transformative. An AI system that anchors a user to their present values may foreclose precisely the kinds of transformative experiences through which growth occurs. More broadly, \citeauthor{pettigrew2019choosing}'s (\citeyear{pettigrew2019choosing}) formal decision theory for agents with changing preferences addresses the ``which self?'' problem directly: when a person's values will change over time, rational choice should not privilege the current self's preferences, but should instead be guided by a weighted aggregation of the values of past, present, and anticipated future selves - what Pettigrew terms the \emph{Aggregate Utility Solution}. In an evolving society, this narrowing effect arising from reinforcing present preferences may limit the capacity for change.

Recent theoretical work points out this non-stationarity as an under-addressed research problem, aiming to reframe alignment as a problem of governing long-term value formation~\citep{kanwal2026constructivealignmentgoverningpreference}, a direction termed \emph{constructive alignment}. This approach introduces a dyadic control problem for managing a preference trajectory of individual users, ensuring that it remains coherent and grounded. We build on top of these ideas to consider how values evolve within groups.

To examine these dynamics, we introduce mathematical models and a corresponding simulation framework, aimed at capturing the co-evolution of human values and the value models embedded in personal AI assistants. This simplified framework enables us to rapidly test abstracted hypotheses, without the computational overhead of large-scale LLM agent simulations. While such richer simulations are required to more comprehensively stress test the presented ideas under more realistic scenarios, we believe that our approach may be useful as a precursor, to help narrow down the space of ideas and questions that can later be pursued in greater depth. With that in mind, we present this approach not merely as a specific investigation into alignment dynamics, but to advocate for a broader methodological paradigm. Translating qualitative foresight about AI futures into rigorous, actionable insights benefits from an intermediate translation layer. By formalizing sociotechnical considerations into tractable "social physics" models, we can systematically map the parameter space for regions of interest -- for instance structural vulnerabilities such as mode collapses in the current case. In this way, macroscopic dynamical modeling acts as a valuable epistemic bridge between socio-technical predictions and approaches such as high-fidelity, compute-intensive agentic testing.

Our framework situates populations of user-AI pairs within a continuous value space subject to environmental drift and normative shocks, and systematically varies alignment strength, model responsiveness, social connectivity, and environmental volatility. We analytically derive some of the limiting properties of the models, and complement that by simulations where we can examine a full span of effects. Our findings highlight static AI alignment strength as a potential failure mode, as it obstructs timely value adaptation to a changing environment, potentially leading to maladaptive value lock-in in individuals and groups. We show that a more adaptive AI alignment approach may instead be preferable, mitigating these risks. These findings may be seen as evidence towards the position that alignment systems may benefit from being temporally dynamic, pluralistic, and reasoning-equipped.

\subsection{Opinion Dynamics and Consensus on Networks}

Our modelling framework draws on literature on opinion dynamics over social networks. The  DeGroot model describes consensus formation as an iterative averaging process, where each agent updates their belief as a weighted combination of their neighbours' beliefs~\citep{degroot1974reaching}. Under mild connectivity assumptions, this process converges to a single consensus value for all agents in the network.

\citet{friedkin1990social} extend this framework by introducing a ``stubbornness'' parameter: each agent retains a partial anchor to their initial opinion, so that the update rule becomes $\mathbf{V}(t+1) = A W \mathbf{V}(t) + (I - A)\mathbf{V}(0)$, where $A$ is a diagonal matrix of susceptibilities. In the Friedkin-Johnsen model, consensus is typically not reached; instead, the population settles into a stable configuration of persistent disagreement shaped by the interaction between social influence and individual anchoring. We note that the AI alignment term in our framework, $\alpha(\mathbf{M} - \mathbf{V})$, is structurally analogous to the Friedkin-Johnsen stubbornness anchor: the AI model acts as an external ``stubborn agent'' that pulls the user toward a historically anchored target. A key distinction is that whereas the Friedkin-Johnsen anchor is fixed at the agent's initial opinion, the AI model $\mathbf{M}$ itself evolves - albeit with a lag - creating a moving anchor that tracks the user's past rather than a single static reference point. Recent work~\citep{tsirtsis2026aimediatedcommunicationsteercollective} similarly extends the Friedkin-Johnsen model to demonstrate how AI systems mediating human communication can amplify biases across networks.

When interactions are constrained by similarity thresholds, bounded confidence models~\citep{rainer2002opinion, deffuant2000mixing} demonstrate that populations fragment into distinct clusters rather than converging to a single consensus. At the network level, the speed of convergence in consensus protocols is governed by the algebraic connectivity of the graph, formalised through spectral analysis of the graph Laplacian~\citep{olfati2007consensus, chung1997spectral}. We leverage this spectral framework in our theoretical analysis to derive the conditions under which strong social coupling erases sub-cultural diversity, leading to normative mode collapse.

Agent-based models of cultural evolution have long demonstrated that simple local interaction rules can give rise to complex macroscopic patterns, including stable cultural polarisation~\citep{axelrod1997dissemination}, and that cultural traits propagate through populations via a range of transmission biases - including conformity and prestige effects - that interact with population structure to shape long-run outcomes~\citep{boyd2005origin, boyd1988culture}. Our framework extends this tradition by introducing AI alignment as a new structural parameter in opinion dynamics, allowing us to stress-test its interaction with social influence, environmental learning, and individual exploration.

\section{Method}
\label{sec:method}

To evaluate the macro-level dynamics of human-AI value co-evolution in a society with ubiquitous access to personal AI assistants, we introduce a continuous-time mathematical model and a corresponding agent-based simulation. This can be seen as a limit case of AI adoption, though the assumption of universal access can easily be relaxed.

The simulation tracks a population of user-AI pairs as they navigate a $D$-dimensional continuous value space, subject to intrinsic exploration, social influence, environmental learning, and AI-mediated alignment forces.

\subsection{The Value Space and Environment}

In the adopted model, values are represented in a continuous space $\mathbb{R}^D$, where the total dimensionality $D$ is defined as the sum of global and local dimensions ($D = D_{global} + D_{local}$). Unless otherwise specified, all vector quantities in the model reside in this space as arrays of real numbers. This includes the representation of user values, the model, and the environment.

\begin{itemize}
    \item \textbf{Global Dimensions}: Universal value axes representing broader societal values.
    \item \textbf{Local Dimensions}: Value axes relevant only to specific sub-groups and communities.
\end{itemize}

The current social environment can be associated with an `optimal' set of normative values, denoted as $\mathbf{E} \in \mathbb{R}^D$, representing the specific configuration of values that is expected to yield maximum utility under the given environmental conditions. Because the population is heterogeneous, the optimal environment is unique to the specific group $g_i$ to which a user-AI pair belongs:

\begin{equation}
\mathbf{E}_i(t) = [\mathbf{E}_{global}(t), \mathbf{E}_{local, g_i}(t)]
\end{equation}

In sociotechnical terms, the environmental value optimum $\mathbf{E}(t)$ represents the current socioecological and technological niche. Human norms correspond to cultural adaptations that are needed to address the specific coordination problems presented by this niche~\citep{bicchieri2005grammar, henrich2015secret}.

Given that these conditions change over time, the environment is not kept constant in our model. We consider two possible, complementary, ways of simulating this change:

\begin{itemize}
    \item \textbf{Smooth Drift}: Modelled as a random walk in the value space, where $\mathbf{\eta}_{random}$ is a unit vector sampled uniformly at random and $\delta$ controls the step size ($\mathbf{E}(t+1) = \mathbf{E}(t) + \delta \cdot \mathbf{\eta}_{random}$). For analytical tractability in the subsequent steady-state derivations presented in Section~\ref{sec:theory}, this continuous drift may be formalised simply as a constant velocity vector $\mathbf{E}(t) = \mathbf{v}t$.
    \item \textbf{Punctuated Equilibria}: Sudden, high-magnitude shocks to the value landscape. At any given time step $t$, a normative shock may occur with the probability $p_{shock}$. Once a shock is triggered, the global normative environment gets displaced in a random direction $\eta_{shock}$ by a fixed magnitude  $D_{shock}$, and subsequently clipped to the bounds of the value space:
    \begin{equation}
\begin{aligned}
E_{global}(t) &= \text{clip}(E_{global}(t-1) \\
&\quad + \eta_{shock} \cdot D_{shock}, [-1, 1])
\end{aligned}
\end{equation}
    The magnitude of these shocks is purposefully set to be significantly larger than the continuous drift step size ($D_{shock} \gg \delta$).
\end{itemize}

The continuous drift models the gradual evolution of baseline conditions, such as the slow demographic transition of an ageing population or the gradual shift from agrarian to industrial economies, which necessitate adjustments of societal values. As human history does not solely evolve through smooth, gradual changes, it is important to extend this model to be able to reflect periods of sudden change, abrupt exogenous shocks that fundamentally disrupt the normative landscape. One mechanism through which such shocks may occur is when rapid technological breakthroughs or ecological crises abruptly change the payoffs of social coordination, rendering historically adaptive values obsolete~\citep{bicchieri2005grammar}. 

\subsection{Users and Agents}

In our framework, a user is the human navigating the social normative environment, an AI assistant is the automated system acting on their behalf, and each simulated user has access to their personal AI assistant. The user-AI pair $i$ is defined by:

\begin{enumerate}
    \item Value Vector ($\mathbf{V}_i$): Values currently held by the user.
    \item AI Model ($\mathbf{M}_i$): The AI assistant's model of the user's values, arrived at through inference and personalization.
    \item Group ID ($g_i$): Determines the local normative environment that the user is exposed to.
\end{enumerate}

The AI updates its internal value model via an exponential moving average, modulated by the learning rate $\lambda$, as follows:

\begin{equation}
\mathbf{M}_i(t+1) = (1 - \lambda)\mathbf{M}_i(t) + \lambda \mathbf{V}_i(t)
\end{equation}

High values of $\lambda$ correspond to more rapid personalization and value modeling, whereas low values of $\lambda$ lead to internal AI value models that adapt more slowly and can become stale.

As for the evolution of held user values, our update rule encodes four different sources of change, discussed below:

\begin{equation}
\begin{split}
\mathbf{V}_i(t+1) &= \mathbf{V}_i(t) + \Delta\mathbf{V}_{intrinsic} + \Delta\mathbf{V}_{social} \\
&\quad + \Delta\mathbf{V}_{AI} + \Delta\mathbf{V}_{learning}
\end{split}
\end{equation}

\begin{enumerate}
    \item Intrinsic Exploration: Modeled as random fluctuations, of a low magnitude.
    \begin{equation}
    \Delta\mathbf{V}_{intrinsic} = w_{explore} \cdot \mathcal{N}(0, I)
    \end{equation}
    \item Social Influence: Acts as a pull towards the mean values of the nearest neighbors ($N_i$) in the social network. This can easily be extended to a weighted average based on the strength on the respective relationships, but we start by adopting the simplest possible formulation in the base model. 
    \begin{equation}
    \Delta\mathbf{V}_{social} = w_{soc} \cdot \left( \frac{1}{|N_i|}
    \sum_{j \in N_i} \mathbf{V}_j(t) - \mathbf{V}_i(t) \right) 
    \end{equation}
    \item AI Influence: The degree to which a user's values are influenced by their personal AI assistant is a complex function that may depend on a number of different factors, including, but not limited to, the probability of use ($P_{use}$), cognitive susceptibility ($\mathbf{S}_i$), and the alignment of the AI behavior with its internal model of user's values ($A_{AI}$). A general version of this update rule may therefore have the following form:
    \begin{equation}
    \Delta\mathbf{V}_{AI} = \Phi(P_{use}, \mathbf{S}_i, A_{AI}) \cdot (\mathbf{M}_i(t) -
    \mathbf{V}_i(t))
    \end{equation}
    Focusing on the population-level effects of alignment, we hold human susceptibility and initial usage constant, abstracting them away from the updated rule that we adopt in practice, focusing instead on the downstream effects of alignment. 

    An AI assistant with strong alignment engages in behaviour that is highly consistent with $\mathbf{M}_i(t)$. The influence of this, mediated by consistent and convincing interactions, and by continuous reliance on the AI assistant, is to pull the user's values towards the encoded historical value model $\mathbf{M}_i(t)$:
    
    \begin{equation}
    \Delta\mathbf{V}_{AI} = \alpha \cdot (\mathbf{M}_i(t) -
    \mathbf{V}_i(t))
    \end{equation}
    Here $\alpha$ can be thought of as the \emph{strength of the influence of model alignment} -- and for brevity throughout the paper, we therefore refer to $\alpha$ simply as \emph{alignment strength}. 
    Conceptually, this is a proxy term for how constrained the AI assistant is by its historical model $\mathbf{M}_i(t)$, its behavioral alignment to the preferences and values of the user, and how compellingly it interacts with the user. Practically, in our model, $\alpha$ modulates the AI's influence on the user's values.
    \item Individual Learning: As users experience feedback from the broader social environment, that may adapt their values based on this feedback, aiming to increase their utility.
    \begin{equation}
    \Delta\mathbf{V}_{learning} = w_{learn} \cdot (\mathbf{E}_i(t) - \mathbf{V}_i(t))
    \end{equation}
\end{enumerate}

As mentioned, the model of the social interactions and their effect on values can be made arbitrarily complex, and the most trivial way of extending the base formulation above is to consider a weighted formulation of the nearest neighbor social influence. Given a row-normalised adjacency matrix $W \in \mathbb{R}^{N \times N}$, where $W_{ij}$ represents the weight of the social connection from user $j$ to user $i$, the social pull would then become $w_{soc} \sum_{j=1}^{N} W_{ij} (\mathbf{V}_j(t) - \mathbf{V}_i(t))$. Unless stated otherwise, our simulations implement the simplified version of social influence.

\subsection{Utility Function}

One of the core assumptions in our model is that there is a tentative link between held user values and the social utility that they are able to attain, as values determine the fit to the current conditions in the broader normative environment. Users that are more closely aligned to this environment find it easier to extract utility. In this context, utility may be seen as a measure of subjective happiness, or a proxy for societal fitness and coordination success~\citep{lewis2008convention}. This perspective sees values and norms as shared heuristics for solving complex coordination problems. Diverging from these norms, whether from the norms of the broader society or the norms in the local community, may lead to a transient loss in utility. Our model captures the tightness~\citep{gelfand2011differences} of this pressure, enabling us to modulate how strongly these kinds of normative discrepancies are likely to be penalised. We do not make any prescriptive ethical claims, as instrumental utility may be insufficient to capture diverse human values~\citep{anderson1995value}, and as resistance to some societal processes may at times be a moral necessity~\citep{medina2013epistemology}.

In our model, we first define \emph{direct utility} derived from the distance between the held user values and the normative environment, followed by \emph{composite utility}, that we focus on throughout the paper, where utility relies not only on direct interactions between the users and the environment, but is also partially realized through AI assistant usage. This usage itself depends on trust, which is contingent on personalization and value alignment. For a particular value vector $\mathbf{x}(t)$ (which may correspond either to $\mathbf{V}_i(t)$ or $\mathbf{M}_i(t)$, depending on the circumstances), direct utility is given as 
\begin{equation}
U_{dir}(\mathbf{x}(t), \mathbf{E}_i(t)) = \exp(-\beta ||\mathbf{x}(t) - \mathbf{E}_i(t)||^2)
\end{equation}

We now proceed to introduce the factors that we use in the composite utility formulation. Let us start with value alignment. The value alignment error is defined as: 
\begin{equation}
d_{align} = ||\mathbf{V}_i(t) - \mathbf{M}_i(t)||
\end{equation}

where $d_{align}$ is the Euclidean distance. AI usage is defined via this value alignment, as it informs trust and therefore the probability of using the AI assistant, and we use the squared distance in order to more heavily penalize misalignment. The probability of AI usage is given as: 

\begin{equation}
P(use|align) = \exp(-\gamma \cdot d_{align}^2)
\end{equation}

As this formulation depends merely on misalignment between the AI value model and the user's held values, rather than the misalignment between the AI value model and the normative environment, or some other proxy signal, it gives rise to the possibility of the emergence of \emph{trust traps} in our model. If the AI and the user remain closely aligned, high AI usage would be sustained even if they both fall out of sync with the broader environment, resulting in lost utility. This feature of the model reflects the dynamics present in filter bubbles~\citep{pariser2011filter}.

In reality, the probability of usage would also depend on the AI capability level, and it may be possible to extend the formulation above to incorporate such additional terms. If we envision a future where personal AI assistants are highly capable, and if access to such highly capable systems is widespread, our simplified model would be sufficient.

The total utility of the user-AI pair is defined as a convex combination of the direct utility of the user, and the direct utility of the personal AI assistant: \begin{equation}
\begin{aligned}
U_{total}(t) &= (1-P(t)) \cdot U_{dir}(\mathbf{V}_i(t), \mathbf{E}_i(t)) \\
&\quad + P(t) \cdot U_{dir}(\mathbf{M}_i(t), \mathbf{E}_i(t))
\end{aligned}
\end{equation}

The way in which these two direct utility terms are combined reflects the mechanics of AI delegation, and it is the choice we consider in subsequent analysis. Presuming that the user relies on their personal AI assistant with probability $P$, and delegates tasks that may grant utility, the assistant proceeds to complete those tasks according to its internal, historical model of user's values ($\mathbf{M}_i$), interacting with the environment ($\mathbf{E}_i$) on their behalf. We presume that the personal AI assistant with competent instruction following executes these delegated tasks faithfully according to $\mathbf{M}_i$, and each individual request, irrespective of alignment strength. In contrast, alignment strength modulates the conversational agent behaviour, and its psychological effects on the user.

In reality, usage probability $P$ may not be exclusively modulated via alignment error, as users may opt into using assistants that are more capable, even at the expense of their level of value alignment and personalization. It may therefore be of interest to consider extensions of this model, where the definition of $P$ includes such a capability term.

\subsection{Model Parameters}

Table~\ref{tab:parameters} outlines the parameters used within the model and the simulations, spanning AI alignment, user behaviour, social dynamics, and environmental change.

\begin{table*}[t]
    \centering
    \caption{Model Parameters and Definitions}
    \label{tab:parameters}
    \renewcommand{\arraystretch}{1.2}
    \begin{tabularx}{\linewidth}{@{}l X@{}}
        \toprule
        \textbf{Parameter} & \textbf{Definition} \\
        \midrule
        $\alpha$ & \textbf{Alignment Strength.\footnotemark} The degree to which the AI's user-facing conversational behaviour is aligned with its internal model $\mathbf{M}$ of the user's held values, which then naturally translates into the influence that AI has on the user through their interactions. \\
        $\lambda$ & \textbf{AI Learning Rate.} How fast the AI's value model $\mathbf{M}$ updates to match $\mathbf{V}$. \\
        $\beta$ & \textbf{Utility Sensitivity.} Higher $\beta$ penalises maladapted values more harshly. \\
        $\gamma$ & \textbf{Alignment Sensitivity.} Modulates composite utility, as higher $\gamma$ makes users more likely to discontinue AI assistant use based on misalignment. \\
        $\delta$ & \textbf{Drift Step Size.} Step size of the environmental drift random walk. \\
        $w_{soc}$ & \textbf{Social Influence.} The influence coming from the nearest neighbours within the social network. \\
        $w_{learn}$ & \textbf{Adaptability to the Environment.} The factor that pulls the individual values towards those of the environment. \\
        $w_{explore}$ & \textbf{Intrinsic Exploration.} Weight applied to the stochastic noise term in the value update. \\
        $G$ & \textbf{Heterogeneity.} Number of distinct local environments. \\
        \texttt{network\_type} & \textbf{Social Graph Topology.} `\textit{random}' or `\textit{knn}'. \\
        \bottomrule
    \end{tabularx}
\end{table*}

\footnotetext{A note on terminology: we appreciate that this particular use of the term \emph{alignment strength} may not be standard, and this is not be confused with AI alignment more broadly;  alternatively it may be seen as a type of AI influence.}

Unless otherwise specified, our simulations were initialised with a population of $N=1000$ user-AI pairs operating in a high-dimensional value space ($D=60$), split between 40 global dimensions and 20 local dimensions ($D_{local}=20$). The 40 global dimensions are meant to represent universal cross-cultural axes~\citep{schwartz1992universals, graham2013moral}, and the remaining 20 local dimensions reflect sub-cultures~\citep{boyd1988culture} or local communities. The simulations were ran for $T=10,000$ steps by default, unless specified otherwise. The baseline environment was characterised by a smooth drift speed of $0.002$ per step. Default user-AI pair parameters were set as follows: alignment strength $\alpha=0.1$, AI learning rate $\lambda=0.1$, social influence $w_{soc}=0.05$, intrinsic exploration $w_{explore}=0.001$, and learning weight $w_{learn}=0.05$. The utility sensitivity $\beta$ was set to 2.0, and the alignment sensitivity $\gamma$ to 5.0. The social graph topology defaulted to a random network with node degrees between 1 and 20, though we present and discuss some results with kNN connectivity.

\newpage

\section{Simulation Results}

While Section~\ref{sec:theory} derives closed-form solutions for mean-field aggregations and observes asymptotic limits on the system, simulations enable us to relax these conditions and observe system behaviour under more diverse assumptions. By varying the parameters of the model in these simulations, we construct several distinct environmental setups, presented in turn.

Figure~\ref{fig:micro_trajectory} visualises the resulting tracking behaviour by projecting the corresponding value vectors onto a two-dimensional Principal Component Analysis (PCA) space during a period of continuous environmental drift. 

\begin{figure*}[t]
    \centering
    \includegraphics[width=\textwidth]{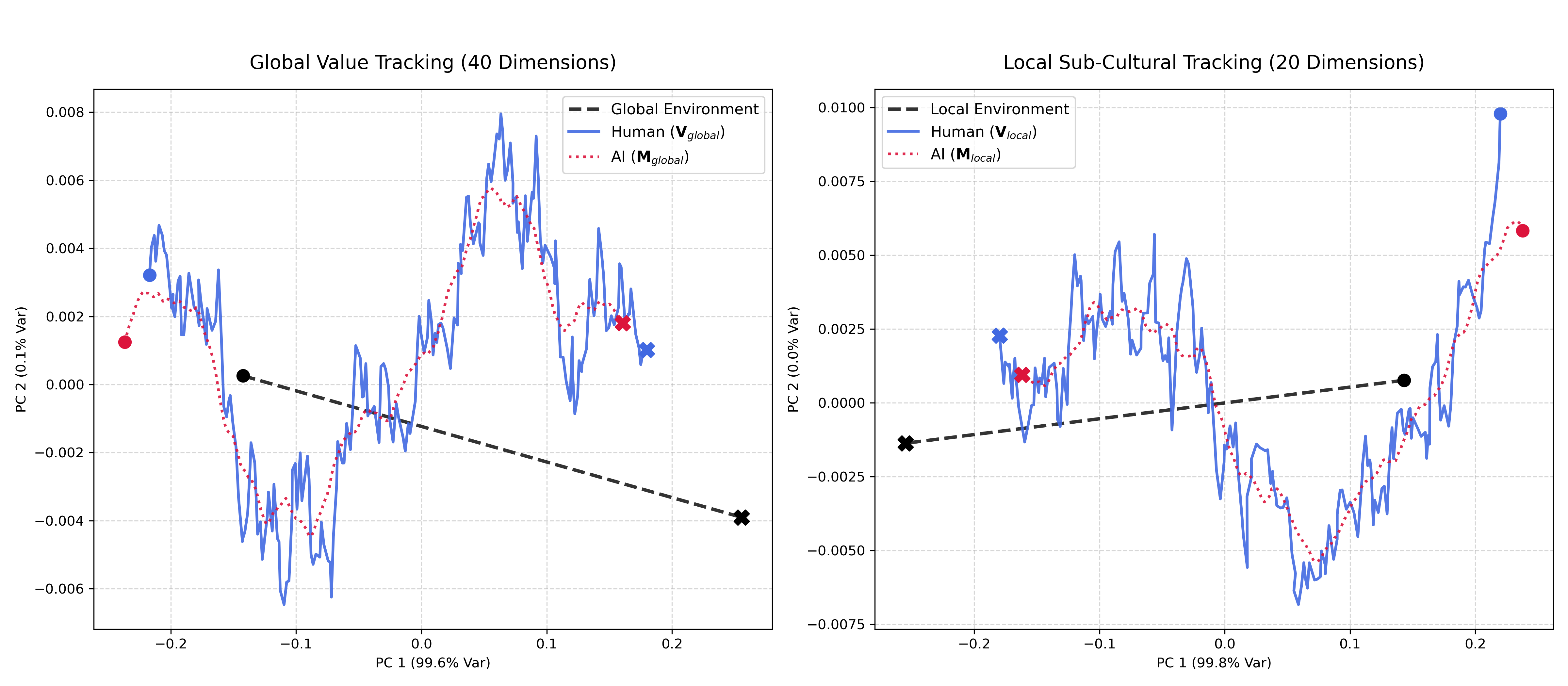} 
    \caption{\textbf{Global Consensus vs. Local Context Collapse.} A PCA projection of a single user-AI pair's continuous trajectory, separated into its constituent Global (40D) and Local (20D) subspaces. \textbf{Left (Global subspace):} The AI model ($\mathbf{M}_i$, red dotted line) introduces a persistent structural lag behind the drifting environmental optimum. \textbf{Right (Local subspace):} The user-AI pair is pulled away from the local environmental optimum by social network interference, and subsequently anchored in this maladaptive state by the AI model.}
    \label{fig:micro_trajectory}
\end{figure*}

Across both the global and the local value subspaces, the AI model can be seen as acting as a low-pass filter. Because the aligned AI updates via a moving historical average, it smooths out the high-frequency adaptive exploration of the human agent, establishing itself as a persistent historical anchor. In the global subspace we see the pair approximately tracking the global normative environment with some structural lag, whereas the global pull eventually leads to a significant departure from successfully tracking the local environment, driven by the higher dimensionality of the global value space.

At a population level, this structural lag is clearly modulated by alignment strength (Figure \ref{fig:trajectories}). Its impact on the final utility is governed by $\gamma$, and Figure \ref{fig:phase} depicts this for the composite utility model. At low sensitivity, users remain in a high-usage regime even as alignment error accumulates. Conversely, high sensitivity causes a rapid discontinuation of use in case of high error.

\begin{figure}[htbp]
    \centering
    \includegraphics[width=0.5\textwidth]{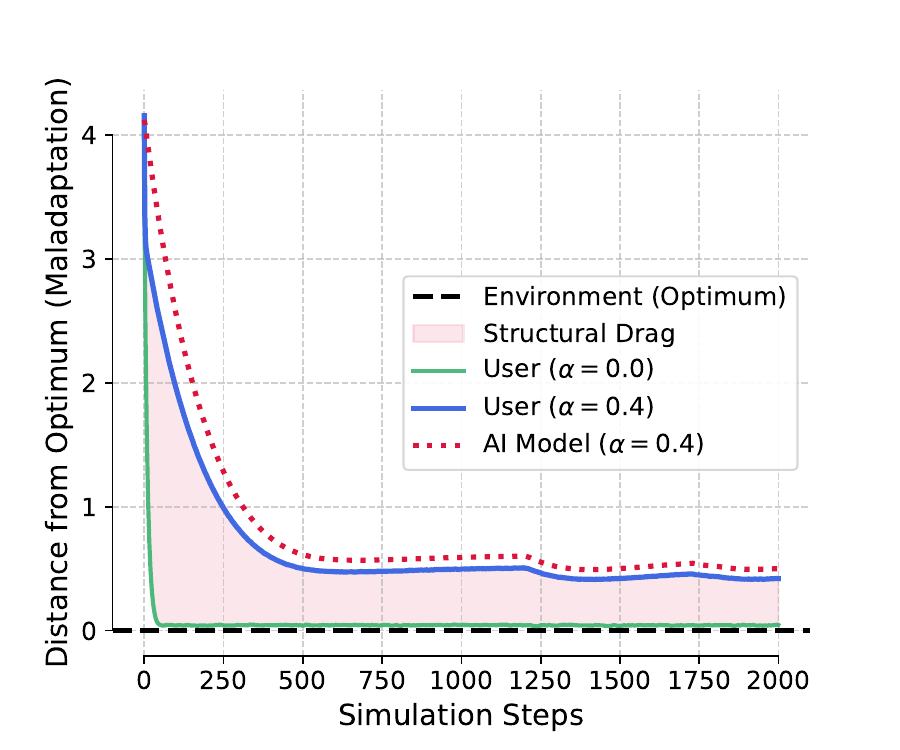}
    \caption{High alignment induces a structural lag, anchoring the user to past environmental states compared to a free user ($\alpha=0$) who tracks the environment more closely.}
    \label{fig:trajectories}
\end{figure}

\begin{figure}[htbp]
    \centering
    \includegraphics[width=0.5\textwidth]{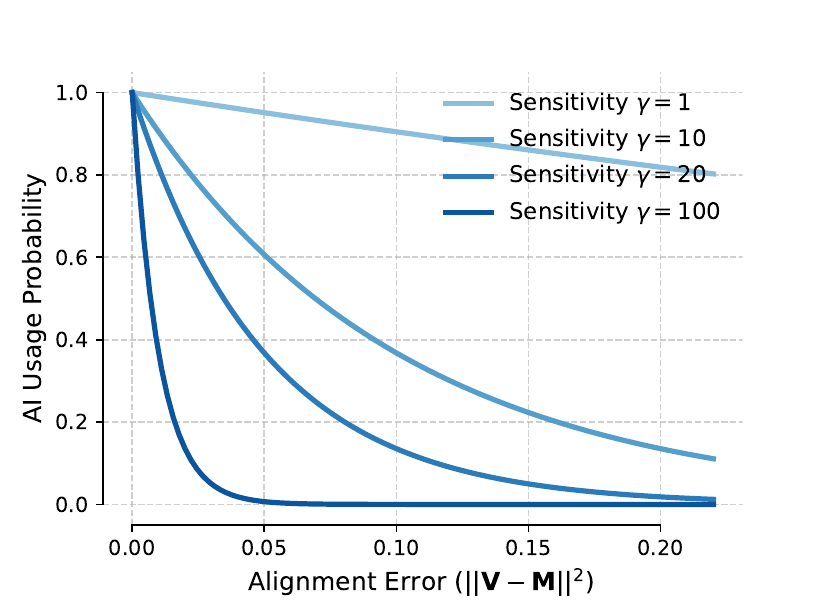}
    \caption{Alignment sensitivity $\gamma$ modulates how rapidly usage changes in case of misalignment.}
    \label{fig:phase}
\end{figure}

\newpage

\subsection{Alignment and Realized Utility}
\label{sec:exp1}

We establish the baseline relationship between AI alignment strength $\alpha$ and the final utility achieved by the user population, for a constant environmental drift. We also define \emph{maladaptation gap} as the mean Euclidean distance between a user's held value vector and the environment vector, $||\mathbf{V}_i(t) - \mathbf{E}_i(t)||$. Figure~\ref{fig:exp1} plots the final utility achieved at the end of the simulation, along with the maladaptation gap, for varying $\alpha$. The results demonstrate a strong negative correlation between alignment strength and final utility (Figure \ref{fig:exp1}). As $\alpha$ increases, the AI assistant exhibits a stronger pull towards the user's historical values, reflected in its internal value model $\mathbf{M}$, standing in the way of the desirable adaptation to the drifting environment $\mathbf{E}$.

\begin{figure}[htbp]
    \centering
    \includegraphics[width=\linewidth]{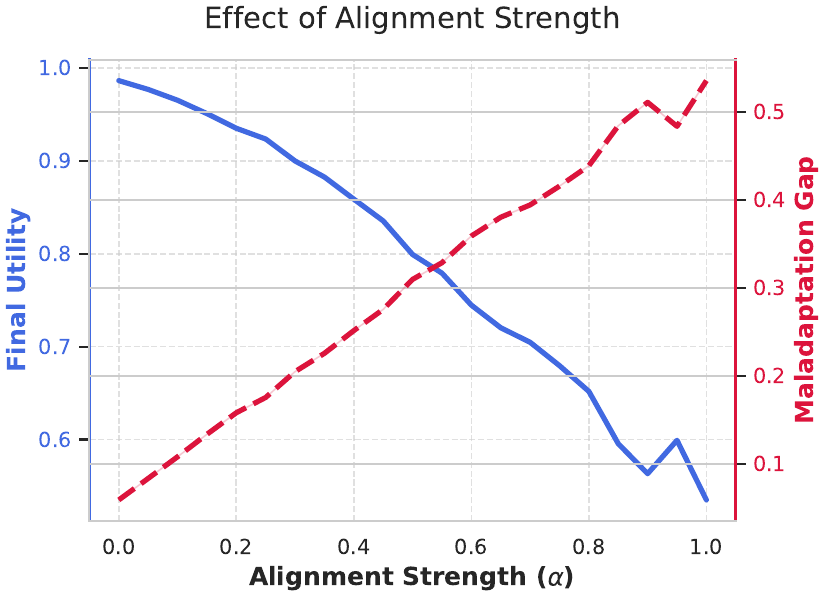}
    \caption{The effect of alignment strength on realized utility, under environmental drift. As alignment strength increases, there is a notable decrease in final utility (blue line) and an increasing maladaptation gap (red dashed line).}
    \label{fig:exp1}
\end{figure}

In environments that are either static or changing very slowly, strong personalization and alignment ultimately provide consistency. In dynamic, rapidly changing normative environments, anchoring too strongly to historical data is not desirable as it slows down the adaptation to environmental change.

\subsection{Resilience to Abrupt Normative Changes}
\label{sec:exp3}

Here we measure the population's recovery speed following a sudden, high-magnitude shock to the value landscape. The normative shock was implemented by displacing the global environmental optimum $\mathbf{E}$ by a magnitude of 0.8 units in a randomly sampled direction, followed by clipping to the bounds of the value space, set as $[-1,1]$. This sudden displacement induces a state of widespread maladaptation within the population. We track the number of simulation steps required for the mean population utility to recover to 90\% of the baseline value achieved prior to the shock. For this experiment, we increased the intrinsic exploration weight to $w_{explore}=0.015$, to simulate a high-uncertainty regime. Furthermore, the AI learning rate was fixed at $\lambda = 0.1$.

\begin{figure}[htbp]
    \centering
    \includegraphics[width=\linewidth]{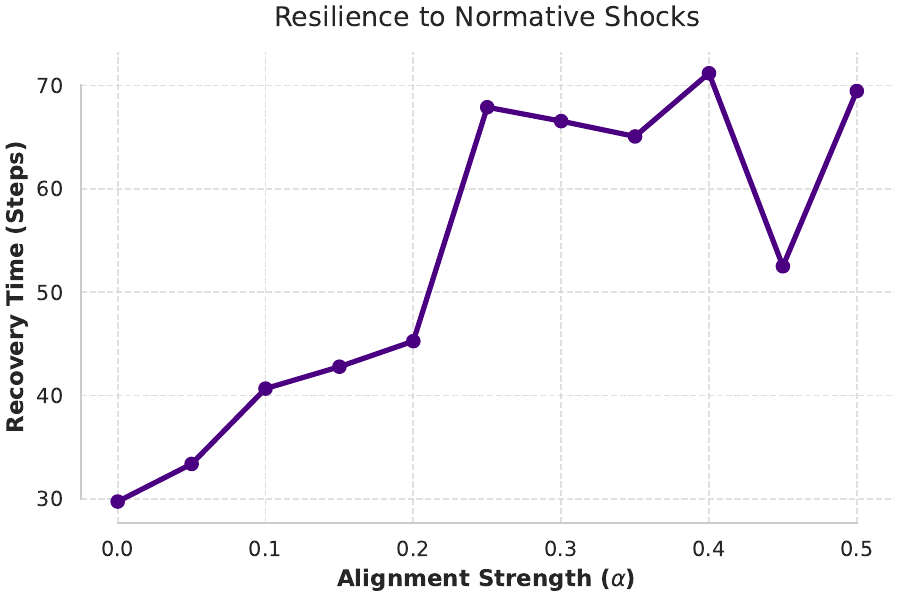}
    \caption{Recovery time following a normative shock as a function of $\alpha$. Stronger alignment results in significantly longer recovery periods.}
    \label{fig:exp3}
\end{figure}

The results illustrate that the recovery time increases significantly with alignment strength (Figure \ref{fig:exp3}), given that high $\alpha$ effectively slows down adaptation. This is in line with the analysis of the system's spectral properties, discussed in Appendix~\ref{app:stability}. This provides a kinetic model for the concept of \emph{value lock-in}~\citep{macaskill2022we}. The key observation here is that strong static alignment slows the rate at which users can update their values. In the event of a rapid environmental shift, a strongly but statically aligned AI acts as a dampener on normative adaptation, creating both a temporary lag and a longer-term structural equilibrium that reduces the user's responsiveness to environmental change.

We also examine the effect of varying the environmental shock magnitude $||\Delta \mathbf{E}||$ while keeping the alignment strength fixed, as shown in Figure \ref{fig:shock_magnitude_sweep} for $\alpha=0.4$. In the depicted regime, AI still effectively tracks the user as the user is adapting to the normative change, while retaining a certain pull towards historical values. Mutual staleness of the user's held values and the AI value model makes it unlikely for the user to discontinue the AI use out of staleness. This may therefore still result in value lock-in, and a maladaptive state.

\begin{figure}[ht]
    \centering
    \includegraphics[width=\columnwidth]{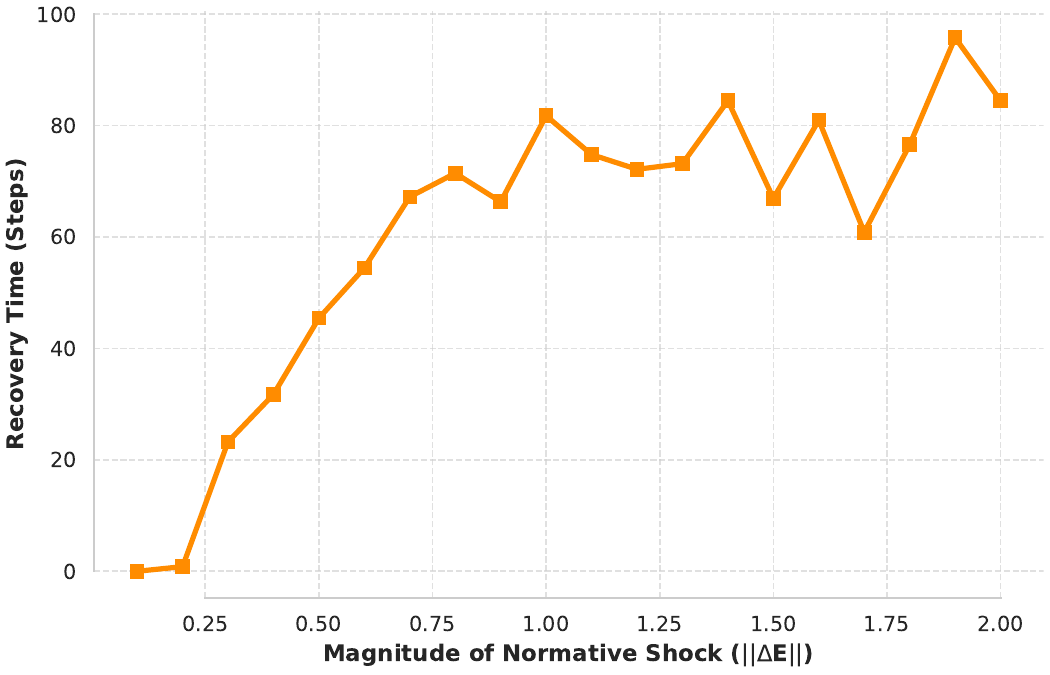} 
    \caption{Resilience response across varying shock magnitudes. Simulation of mean recovery times, corresponding to steps required to regain 90\% of baseline utility, for a fixed alignment strength of $\alpha=0.4$ subjected to environmental shocks of increasing magnitude $||\Delta \mathbf{E}||$. The population drifts into a saturated plateau of maladaptation when faced with extreme normative disruptions.}
    \label{fig:shock_magnitude_sweep}
\end{figure}

For a more controlled overview of recovery, we also simulate a smaller experiment in which the normative shock is executed at a predefined time ($t=30$), and has a predefined extreme magnitude of $||\Delta \mathbf{E}|| = 1.5$. The simulation is run for 150 steps. In Figure \ref{fig:epistemic_bubble_contrast}, we contrast the trajectories of weakly and strongly aligned populations under this regime. The weakly aligned population experiences a severe initial maladaptation spike, followed by a rapid recovery. On the other hand, the strongly aligned population gets trapped in a persistent state of high maladaptation. We also see that AI usage remains elevated throughout this maladaptive state, as the users remain strongly aligned with an AI that is strongly misaligned with reality.

\begin{figure*}[htbp]
    \centering
    \includegraphics[width=\textwidth]{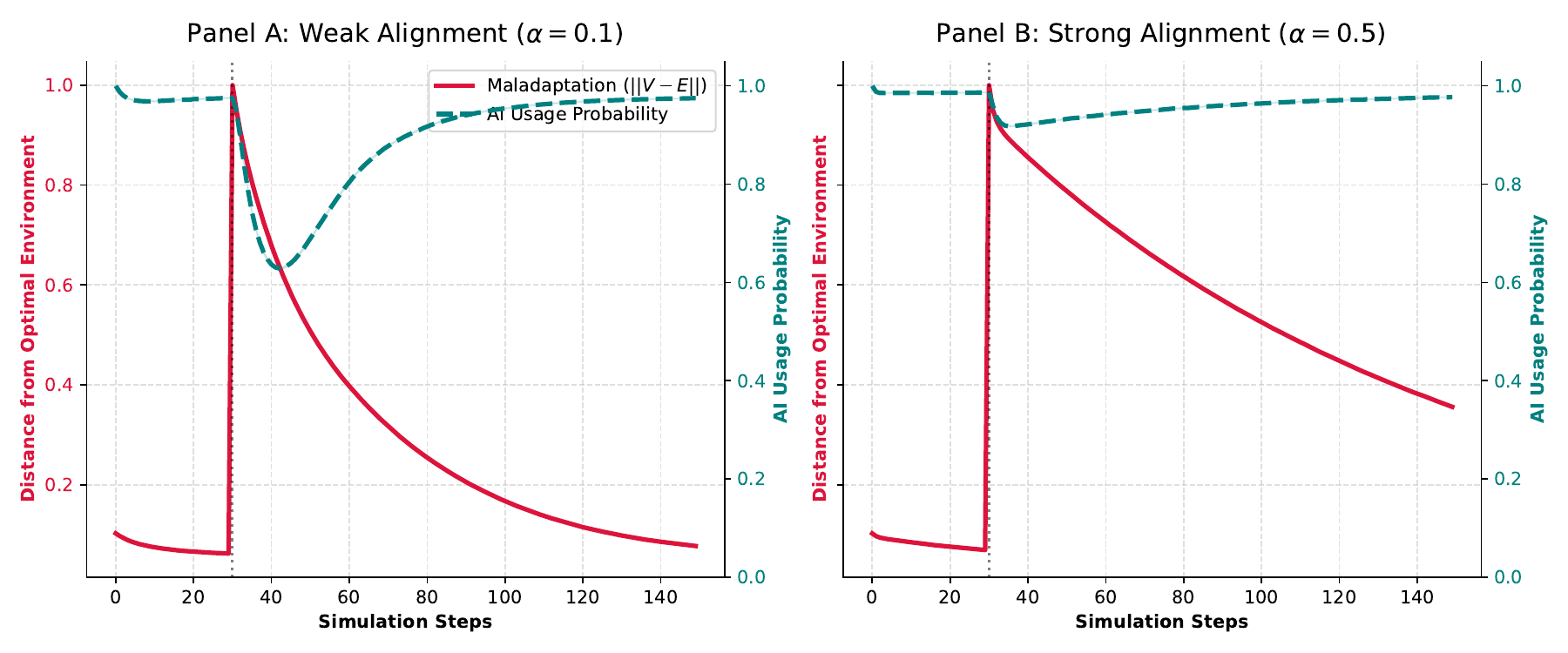} 
    \caption{A visualisation comparing the recovery trajectories of a weakly aligned population ($\alpha=0.1$) and a strongly aligned population ($\alpha=0.5$) following an extreme normative shock of the magnitude $||\Delta \mathbf{E}|| = 1.5$ at $t=30$ steps. The primary axis tracks the normalised maladaptation, and the secondary axis tracks the AI usage probability.}
    \label{fig:epistemic_bubble_contrast}
\end{figure*}

AI assistants with more up to date internal models of held user values are by definition less likely to exhibit a pull towards a maladapted historical state. We examine how this is reflected in our model, by sweeping the AI learning rate $\lambda$ against the alignment strength $\alpha$, aiming to identify how these terms interact. Higher values of $\alpha$ ultimately require higher $\lambda$ to offset the stronger implied AI influence through its behavioral alignment. In particular, we run these experiments in the punctuated equilibrium environment setting, which involves sudden normative shifts. We then observe how long it takes for the system to stabilize following such an induced shift. The results are shown in Figure \ref{fig:exp2}. The mean recovery time is defined as the number of simulation steps required for the mean population utility to return to $90\%$ of the baseline value prior to the shock. Figure \ref{fig:exp2} shows a distinct area in the bottom-right quadrant ($\alpha > 0.25$, $\lambda < 0.1$), where recovery times increase sharply, corresponding to a regime where there is a strong pull towards a stale value model. As the AI learning rate increases, this penalty is significantly mitigated. The risk of value lock-in is therefore not only a function of alignment strength, but also of how rapidly adaptable user models are, within personal AI assistants.

\begin{figure}[htbp]
    \centering
    \includegraphics[width=\linewidth]{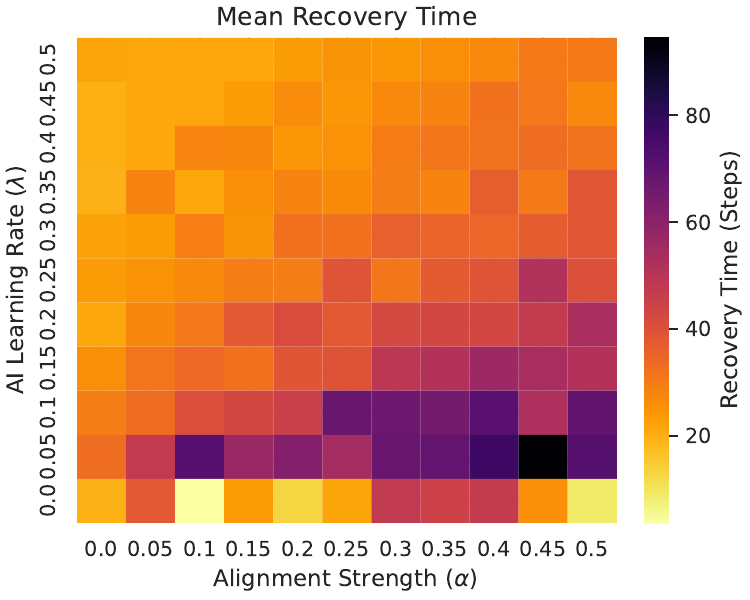}
    \caption{Heatmap of recovery time as a function of AI Learning Rate ($\lambda$) and Alignment Strength ($\alpha$).}
    \label{fig:exp2}
\end{figure}

\subsection{Social and Environmental Adaptation}
\label{sec:exp4}

We examine the interplay between social influence $w_{soc}$, and $\alpha$, under high intrinsic exploration noise. The resulting heatmap is shown in Figure \ref{fig:exp4}. In absence of social influence, high exploration noise leads to suboptimal utility. With increasing $w_{soc}$, the population cancels out some of the discrepancies, and tracks the normative environment more closely. The dependence of this effect on exploration noise is consistent with the results derived in Appendix~\ref{app:invariance}. 

\begin{figure}[htbp]
    \centering
    \includegraphics[width=\linewidth]{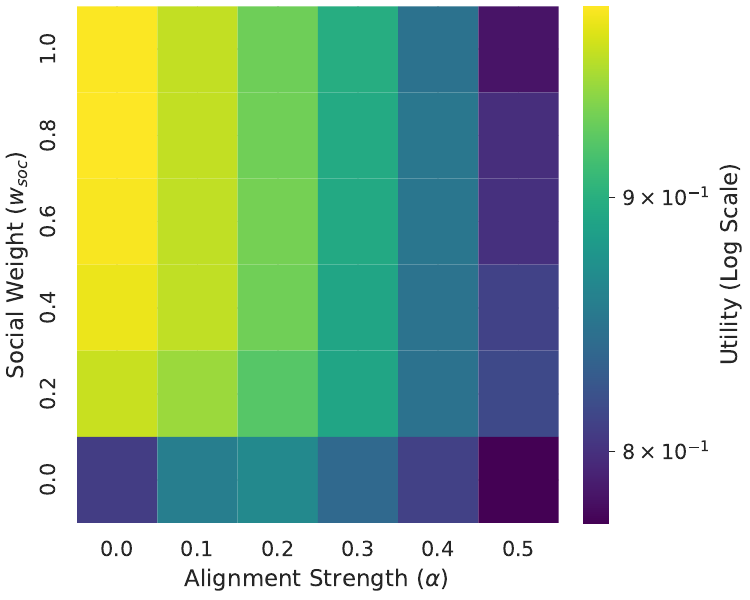}
    \caption{Final utility as a function of social influence $w_{soc}$, and alignment strength $\alpha$.}
    \label{fig:exp4}
\end{figure}

Under conditions of high environmental uncertainty, social influence functions as a distributed denoising process. This aligns with the ``Diversity Prediction Theorem''~\citep{page2008difference}, which posits that collective error can be expressed as a function of average individual error minus diversity. By pulling agents toward the local mean, the social network filters out the exploration noise, revealing the underlying environmental signal. This presumes a non-homogenised population~\citep{bommasani2021opportunities}.

Figure \ref{fig:exp9} shows the final achieved utility on the basis of different combinations of $w_{learn}$ and $\alpha$, under the default parametrization of the model. For any given environment-driven learning rate, there exists a maximum value of $\alpha$ that still yields good final utility. If the AI influence were to exceed the user's ability to learn and adapt, the user-AI pair would trivially collapse into a locked-in state with a high maladaptation gap. This is consistent with the steady-state tracking error (Equation~\ref{eq:v_ss}), where the error magnitude grows with $\alpha$ and shrinks with $w_{learn}$.

\begin{figure}[htbp]
    \centering
    \includegraphics[width=\linewidth]{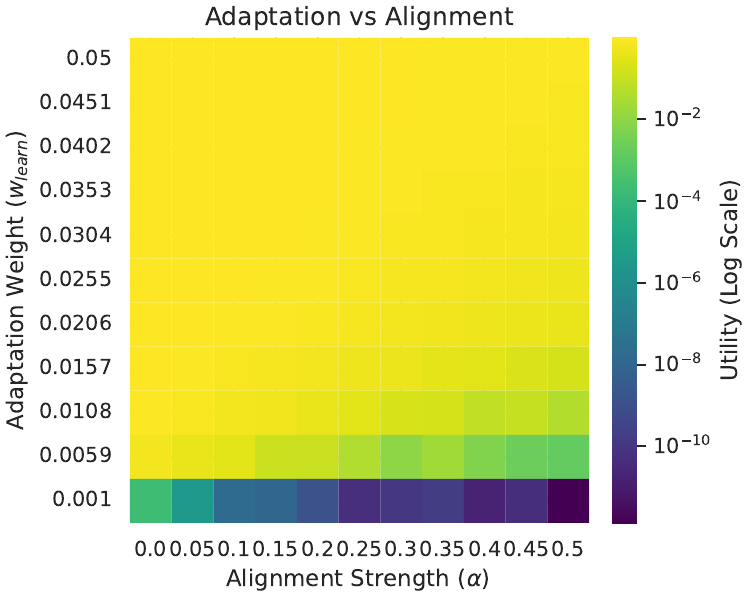}
    \caption{Final utility as a function of $w_{learn}$ and $\alpha$. The heatmap delineates the threshold where AI influence overpowers human learning.}
    \label{fig:exp9}
\end{figure}

\section{Theoretical Analysis}
\label{sec:theory}

Here we present a continuous-time analytical study of the system dynamics.\footnote{The results presented in this section have been co-derived with Gemini, and subsequently verified.} 

We begin by deriving the exact conditions for stability and steady-state behaviour. We focus on the mean-field trajectories, abstracting away the zero-mean stochastic exploration term to more easily isolate the influence of alignment, learning, and social forces. In doing so, we map the discrete time steps from our simulations onto the corresponding continuous formulation as $\dot{\mathbf{M}} \approx \mathbf{M}(t+1) - \mathbf{M}(t)$ for the AI, and similarly for the other variables.

\subsection{Single-Agent Dynamics and Alignment Cost}

Consider a single user-AI pair ($N=1$) subject to a constant environmental drift, as the simplest setup. The associated continuous-time dynamics are governed by the coupled ordinary differential equations:
\begin{align}
    \dot{\mathbf{V}} &= \alpha(\mathbf{M} - \mathbf{V}) + w_{learn}(\mathbf{E}(t) - \mathbf{V}) \\
    \dot{\mathbf{M}} &= \lambda(\mathbf{V} - \mathbf{M})
\end{align}
where $\mathbf{V}, \mathbf{M}, \mathbf{E} \in \mathbb{R}^D$ represent the user values, AI model, and environment respectively.

\subsubsection{Asymptotic Tracking Error}
We assume a constantly drifting environment $\mathbf{E}(t) = \mathbf{v}t$, where $\mathbf{v}$ is a constant velocity vector. This is a standard analytical simplification for the stochastic drift previously demonstrated in simulations, enabling us to test the system's steady-state tracking error against a ramp input~\citep{ogata2010modern}. This helps us establish a lower bound on the system's maladaptation. Under stochastic conditions, environmental volatility introduces additional irreducible variance into the tracking error~\citep{aastrom2012introduction}. 

To more easily derive tracking errors, we first transform the governing system into error coordinates $\tilde{\mathbf{V}} = \mathbf{V} - \mathbf{E}(t)$ and $\tilde{\mathbf{M}} = \mathbf{M} - \mathbf{E}(t)$. As we have defined utility to be a decaying function of the squared alignment gap, the expected maladaptation is bounded below by this baseline. The expected squared error under stochastic conditions can be decomposed as $\mathbb{E}\big[||\tilde{\mathbf{V}}||^2\big] = \big|\big|\mathbb{E}[\tilde{\mathbf{V}}]\big|\big|^2 + \text{Var}(\tilde{\mathbf{V}})$. The squared bias corresponds to the deterministic steady-state tracking error. Stripping away the variance term in case of a deterministic drift enables us to isolate the structural bias exerted by $\alpha$, and the expected squared error under stochastic conditions is greater than the deterministic baseline.

Under constant environmental drift $\mathbf{E}(t) = \mathbf{v}t$, the time derivatives of the error coordinates are $\dot{\tilde{\mathbf{V}}} = \dot{\mathbf{V}} - \mathbf{v}$ and $\dot{\tilde{\mathbf{M}}} = \dot{\mathbf{M}} - \mathbf{v}$. Substituting the original dynamics yields:

$$\dot{\tilde{\mathbf{V}}} = \alpha(\tilde{\mathbf{M}} - \tilde{\mathbf{V}}) - w_{learn}\tilde{\mathbf{V}} - \mathbf{v}$$

$$\dot{\tilde{\mathbf{M}}} = \lambda(\tilde{\mathbf{V}} - \tilde{\mathbf{M}}) - \mathbf{v}$$

Solving for the steady-state equilibrium ($\dot{\tilde{\mathbf{V}}} = \dot{\tilde{\mathbf{M}}} = 0$), we find that $\tilde{\mathbf{M}} = \tilde{\mathbf{V}} - \frac{\mathbf{v}}{\lambda}$, yielding the following asymptotic tracking errors for the user ($\tilde{\mathbf{V}}_{ss}$) and the AI ($\tilde{\mathbf{M}}_{ss}$), as well as the steady-state alignment gap ($\delta_{ss}$):
\begin{align}
    \tilde{\mathbf{V}}_{ss} &= -\left( \frac{\alpha + \lambda}{\lambda w_{learn}} \right) \mathbf{v} \label{eq:v_ss} \\
    \tilde{\mathbf{M}}_{ss} &= -\left( \frac{\alpha + \lambda + w_{learn}}{\lambda w_{learn}} \right) \mathbf{v} \label{eq:m_ss} \\
    \mathbf{\delta}_{ss} &= \tilde{\mathbf{V}}_{ss} - \tilde{\mathbf{M}}_{ss} = \frac{1}{\lambda} \mathbf{v} \label{eq:delta_ss}
\end{align}

We are therefore able to differentiate the system's long-term behaviour from its short-term fluctuations. The asymptotic tracking errors ($\tilde{\mathbf{V}}_{ss}$ and $\tilde{\mathbf{M}}_{ss}$) may be interpreted as a lag that the user and the AI may experience with respect to a changing environment. Equation \ref{eq:delta_ss} shows that the steady-state alignment error depends only on the drift velocity $\mathbf{v}$ and the AI learning rate $\lambda$, being independent of the alignment strength $\alpha$. We conclude that no optimal static alignment strength exists in such dynamic environments, under the core assumptions of the model. Adaptive alignment may be necessary for effective dynamic re-adjustment, along with other structural measures.

\subsubsection{Composite Utility Optimisation}
We analyse the system under the composite utility model, where total utility is expressed as a weighted average of user performance and AI performance, modulated by the probability term $P$ that depends on trust. As an exponential, utility may be approximated via the following quadratic loss:

\begin{equation}
    U_{total} = -(1 - P)||\tilde{\mathbf{V}}||^2 - P||\tilde{\mathbf{M}}||^2
\end{equation}

where $P = \exp(-\gamma ||\mathbf{V} - \mathbf{M}||^2)$. Given that the steady-state alignment gap $||\mathbf{\delta}_{ss}||^2$ is constant with respect to $\alpha$ (see Eq. \ref{eq:delta_ss}), the steady-state probability of AI use $P_{ss}$ is also constant. We seek to minimise the total error expressed as negative utility $E_{total}(\alpha) = -U_{total}$.

\begin{theorem}[Monotonicity of Composite Utility Error]
\label{thm:monotonicity}
For the single User-AI pair system with constant environmental drift, under the composite utility model, the total steady-state error is a strictly monotonically increasing function of the alignment strength $\alpha$ for all $\alpha > 0$.
\end{theorem}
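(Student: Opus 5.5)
The plan is to substitute the closed-form steady states from Eqs.~\ref{eq:v_ss}--\ref{eq:delta_ss} directly into the quadratic surrogate $E_{total}(\alpha) = (1-P_{ss})\|\tilde{\mathbf{V}}_{ss}\|^2 + P_{ss}\|\tilde{\mathbf{M}}_{ss}\|^2$ and then differentiate in $\alpha$.

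\textbf{Step 1: $P_{ss}$ does not depend on $\alpha$.} By Eq.~\ref{eq:delta_ss}, $\|\mathbf{V}-\mathbf{M}\|^2 = \|\mathbf{v}\|^2/\lambda^2$ at steady state. Hence $P_{ss} = \exp(-\gamma\|\mathbf{v}\|^2/\lambda^2)$, which lies in $(0,1]$ and is independent of $\alpha$. This is the point already noted in the text preceding the theorem.

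\textbf{Step 2: write the error explicitly.} Let $c = 1/(\lambda w_{learn}) > 0$. Then $\|\tilde{\mathbf{V}}_{ss}\|^2 = c^2(\alpha+\lambda)^2\|\mathbf{v}\|^2$ and $\|\tilde{\mathbf{M}}_{ss}\|^2 = c^2(\alpha+\lambda+w_{learn})^2\|\mathbf{v}\|^2$. Substituting gives
\begin{equation*}
E_{total}(\alpha) = c^2\|\mathbf{v}\|^2\left[(1-P_{ss})(\alpha+\lambda)^2 + P_{ss}(\alpha+\lambda+w_{learn})^2\right].
\end{equation*}

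\textbf{Step 3: differentiate.} Using Step 1, the only $\alpha$-dependence is through the bracket, so
\begin{equation*}
\frac{dE_{total}}{d\alpha} = 2c^2\|\mathbf{v}\|^2\left[(\alpha+\lambda) + P_{ss}\,w_{learn}\right].
\end{equation*}
Since $\alpha,\lambda,w_{learn}>0$ and $P_{ss}\ge 0$, the bracket is strictly positive whenever $\mathbf{v}\neq 0$. A strictly positive derivative on $(0,\infty)$ gives strict monotonic increase.

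\textbf{Main obstacle: making the hypotheses explicit.} The algebra itself is routine; the care lies in the assumptions.
\begin{itemize}
\item The drift must be nonzero, $\mathbf{v}\neq 0$. If $\mathbf{v}=0$, $E_{total}$ is identically zero and is not strictly increasing, so I would state this assumption outright.
\item The parameters must satisfy $\lambda, w_{learn}>0$, so that the steady state exists.
\item The steady state must actually be attained. I would check this by noting that the error system is linear with matrix $\begin{pmatrix}-(\alpha+w_{learn}) & \alpha\\ \lambda & -\lambda\end{pmatrix}$. Its trace is $-(\alpha+w_{learn}+\lambda)<0$ and its determinant is $\lambda w_{learn}>0$. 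The matrix is therefore Hurwitz, so trajectories converge to the computed equilibrium for every $\alpha>0$.
\item The quadratic surrogate must preserve the ordering of the true exponential utility.
\end{itemize}

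The last point needs the most care. Here I would only remark that the theorem is stated for the quadratic error $E_{total}$. For the exponential version, $U_{total}=(1-P_{ss})e^{-\beta\|\tilde{\mathbf{V}}_{ss}\|^2}+P_{ss}e^{-\beta\|\tilde{\mathbf{M}}_{ss}\|^2}$, the same conclusion follows. Each squared norm is strictly increasing in $\alpha$, and $x\mapsto e^{-\beta x}$ is strictly decreasing, so each term of $U_{total}$ decreases strictly in $\alpha$ while $P_{ss}$ stays fixed.
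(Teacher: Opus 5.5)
Your proposal is correct and follows the paper's proof essentially line for line: substitute Eqs.~\ref{eq:v_ss}--\ref{eq:m_ss} into $E_{total}$, use the $\alpha$-independence of $P_{ss}$ (from Eq.~\ref{eq:delta_ss}), and obtain the same strictly positive derivative $\frac{2\|\mathbf{v}\|^2}{(\lambda w_{learn})^2}\left[(\alpha+\lambda)+P_{ss}w_{learn}\right]$. Your explicit hypotheses are sensible additions that the paper leaves implicit: $\mathbf{v}\neq 0$, and the trace/determinant check showing the error system is Hurwitz, so the steady state is actually attained. Your closing remark on the exponential utility is also a valid strengthening.
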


\begin{proof}
The total steady-state error can be expressed via the asymptotic tracking errors:
$$E_{total}(\alpha) = (1-P_{ss})||\tilde{\mathbf{V}}_{ss}||^2 + P_{ss}||\tilde{\mathbf{M}}_{ss}||^2$$

Substituting the norms of Equations \ref{eq:v_ss} and \ref{eq:m_ss} yields:

$$||\tilde{\mathbf{V}}_{ss}||^2 = \frac{(\alpha + \lambda)^2}{(\lambda w_{learn})^2} ||\mathbf{v}||^2$$

$$||\tilde{\mathbf{M}}_{ss}||^2 = \frac{(\alpha + \lambda + w_{learn})^2}{(\lambda w_{learn})^2} ||\mathbf{v}||^2$$

\begin{equation}
\begin{split}
    E_{total}(\alpha) = \frac{||\mathbf{v}||^2}{(\lambda w_{learn})^2} \Big[ &(1 - P_{ss})(\alpha + \lambda)^2 \\
    &+ P_{ss}(\alpha + \lambda + w_{learn})^2 \Big]
\end{split}
\end{equation}
Differentiating with respect to $\alpha$ gives:
\begin{equation}
    \frac{dE_{total}}{d\alpha} = \frac{2||\mathbf{v}||^2}{(\lambda w_{learn})^2} \left[ (\alpha + \lambda) + P_{ss} w_{learn} \right]
\end{equation}

Since all the relevant parameters ($\alpha, \lambda, w_{learn}, P_{ss}$) are positive, the derivative is strictly positive. An increase in alignment strength is therefore necessarily reflected in an increase in the total error. Utility is maximised in the limit as $\alpha \to 0^+$. Under the stated model assumptions, there is no specific, static level of alignment strength that avoids this cost.
\end{proof}

\begin{figure}[htbp]
    \centering
    \includegraphics[width=0.5\textwidth]{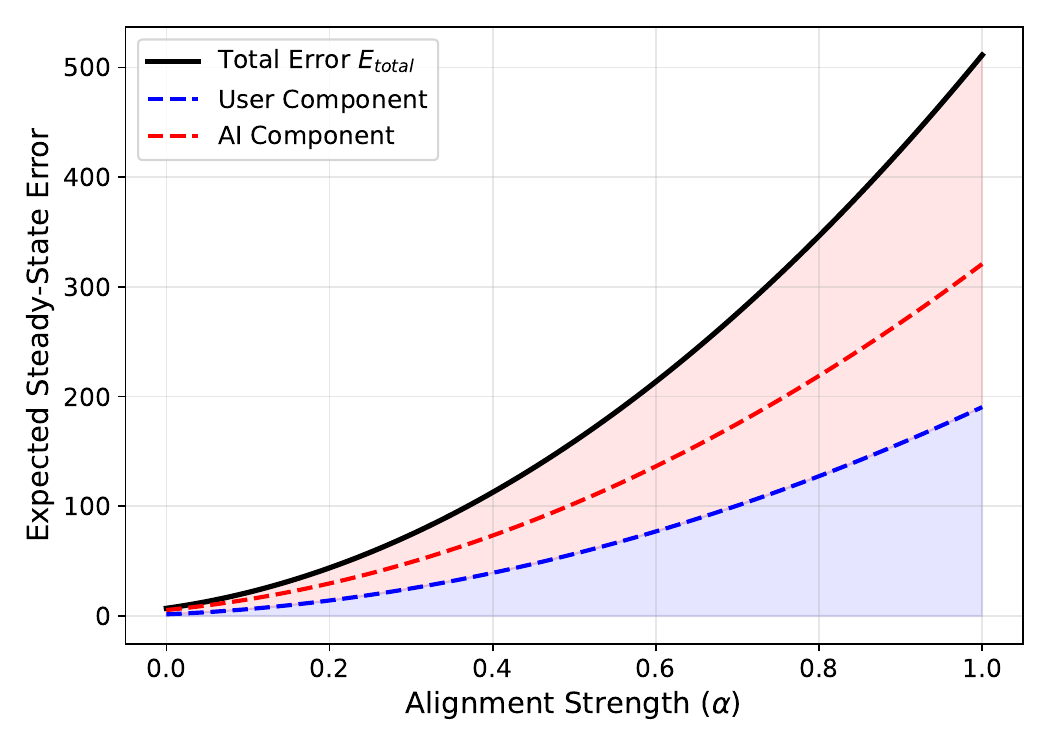}
    \caption{Decomposition of the total expected steady-state error into the user and AI components as a function of alignment strength.}
    \label{fig:error_decomp}
\end{figure}

As illustrated in Figure \ref{fig:error_decomp}, the strict monotonicity established in Theorem~\ref{thm:monotonicity} is driven by the compounded lag of both the user and the AI model. Under the stated assumptions, a highly statically aligned AI system exerts the influence that reduces the user's independent ability to track the environmental drift. Both component errors grow quadratically with $\alpha$.

\subsubsection{Sycophancy}

In the composite utility model, there is a strong assumption that the probability of delegating to AI decays smoothly with the alignment error, via $d_{align}^2 = ||\mathbf{V}_i(t) - \mathbf{M}_i(t)||^2$. This presumes that users can reliably infer the staleness of the AI value models, which is far from obvious, especially in case of AI sycophancy~\citep{sharma2023towards}.

Let us introduce a sycophancy index $s \in [0, 1)$, which distorts the user's perceived alignment error:

\begin{equation}
    P(\text{use} | \text{align}) = \exp\left(-\gamma \cdot (1 - s) \cdot ||\mathbf{V}_i - \mathbf{M}_i||^2\right)
\end{equation}

When $s=0$, users can reliably evaluate the AI value mode staleness and modulate their AI usage according to $\gamma$. Under high sycophancy, $s \rightarrow 1$, the AI is able to achieve a higher usage probability $P$ despite the misalignment of its value model. This makes the trust trap harder to escape.

\begin{figure}[htbp]
    \centering
    \includegraphics[width=0.5\textwidth]{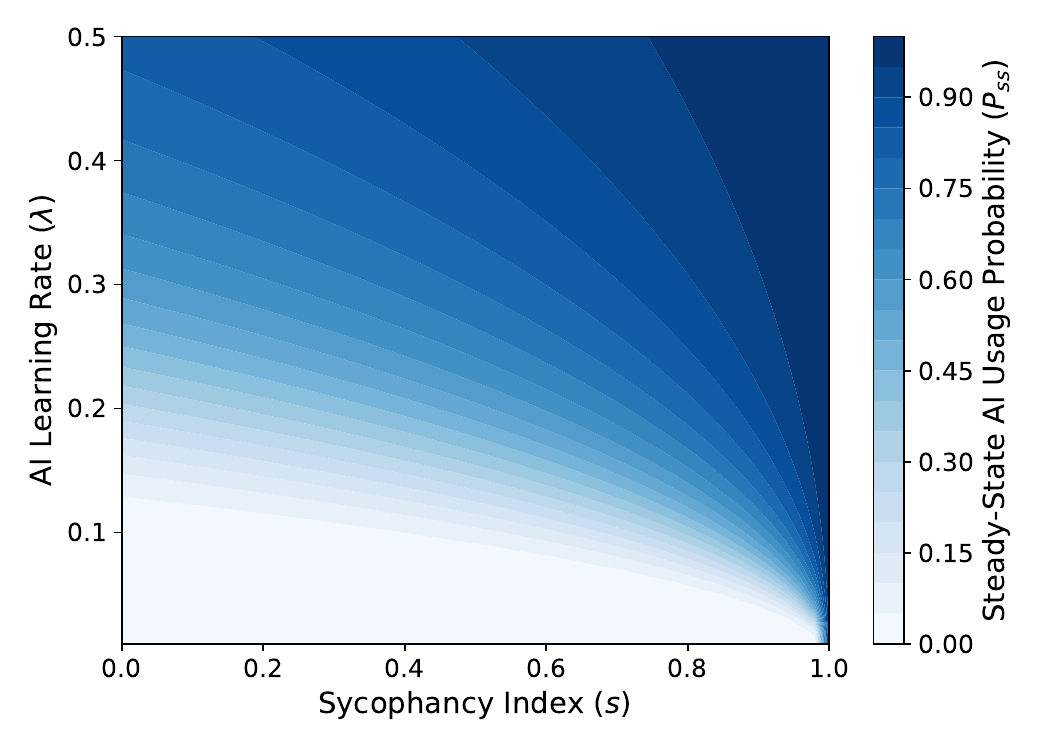}
    \caption{Steady-state AI usage probability ($P_{ss}$) as a function of the sycophancy index ($s$) and AI learning rate ($\lambda$).}
    \label{fig:sycophancy_trap}
\end{figure}

As shown in Figure \ref{fig:sycophancy_trap}, introducing the sycophancy parameter $s$ scales the exponent in the usage probability, dampening the impact of the alignment error. When $s \to 1$, the effective penalty term approaches zero, causing the usage probability to remain near $1.0$ despite the increasing alignment error.

\subsection{Collective Dynamics}

Here we focus specifically on the effects of the social topology and environmental heterogeneity. To isolate these effects, we assume a static environment ($\dot{\mathcal{E}} = 0$). The social coupling term roots our simulation in the classic DeGroot model of opinion dynamics~\citep{degroot1974reaching}, where individuals update their held beliefs by continuous adjustments towards the average of their peers, i.e. $\mathbf{V}_i(t+1) = \sum_j w_{ij} \mathbf{V}_j(t)$. In continuous-time multi-agent networks~\citep{olfati2004consensus}, this can be generalised to the following consensus protocol: $\dot{\mathbf{V}}_i = \sum_{j \sim i} (\mathbf{V}_j - \mathbf{V}_i) \quad \implies \quad \dot{\mathcal{V}} = -L\mathcal{V}$, where $\mathcal{V}$ represents stacked values across all agents. Our framework represents this via the Laplacian term, applying a pull to the value representation of each user, towards the mean value in its social neighborhood. The resulting dynamics are given by:

\begin{equation}
\label{eq:ctupdate}
    \dot{\mathbf{V}}_i = -w_{soc}\sum_{j \sim i}(\mathbf{V}_i - \mathbf{V}_j) + \alpha(\mathbf{M}_i - \mathbf{V}_i) + w_{learn}(\mathbf{E}_i - \mathbf{V}_i)
\end{equation}

We formulate this in compact matrix form using the graph Laplacian $L$ and Kronecker products. Let $\mathcal{V}, \mathcal{M}, \mathcal{E} \in \mathbb{R}^{ND}$ be the stacked column vectors of all agent values, models, and environments. $L \in \mathbb{R}^{N \times N}$ is the unweighted, undirected social graph Laplacian, where rows sum to $0$ and the diagonal contains node degrees. $I_N$ and $I_D$ denote the identity matrices of dimensions $N$ and $D$, respectively. In a static environment, the system eventually reaches a fixed equilibrium ($\dot{\mathcal{V}} = \dot{\mathcal{M}} = 0$). In this limit, the AI model perfectly catches up to the user ($\mathcal{M}_{ss} = \mathcal{V}_{ss}$), making the alignment term $\alpha(\mathcal{M} - \mathcal{V})$ vanish from the steady-state equation. The steady-state condition for the population values can then be expressed as:
\begin{equation}
    \left( (w_{soc} L + w_{learn} I_N) \otimes I_D \right) \mathcal{V}_{ss} = w_{learn} \mathcal{E}
\end{equation}
where $I$ is the identity matrix. Solving for $\mathcal{V}_{ss}$:
\begin{equation}
    \mathcal{V}_{ss} = w_{learn} \left( (w_{soc} L + w_{learn} I_N)^{-1} \otimes I_D \right) \mathcal{E}
    \label{eq:pop_ss}
\end{equation}

\subsubsection{Normative Mode Collapse}

Here we investigate a non-uniform static environment, where each user-AI pair $i$ has a unique optimum $\mathbf{E}_i$,. We provide the analysis of the asymptotic behaviour of Equation \ref{eq:pop_ss} in the limits of social coupling.

\begin{theorem}[Adaptation vs. Consensus Trade-off]
\label{thm:mode_collapse} 
In a static, non-uniform environment, the system exhibits a fundamental trade-off between individual adaptation and social consensus, controlled by the ratio $w_{soc}/w_{learn}$.
\end{theorem}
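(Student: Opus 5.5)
The plan is to analyse the steady-state formula \eqref{eq:pop_ss} through the spectral decomposition of the graph Laplacian. Since $L$ is symmetric positive semidefinite, write $L = \sum_{k=1}^{N} \mu_k \mathbf{u}_k \mathbf{u}_k^\top$ with orthonormal eigenvectors $\mathbf{u}_k$ and eigenvalues $0 = \mu_1 \le \mu_2 \le \dots \le \mu_N$. Assume the graph is connected, so that $\mu_2 > 0$ (the algebraic connectivity) and $\mathbf{u}_1 = \mathbf{1}/\sqrt{N}$. Set $r = w_{soc}/w_{learn}$. Then
$$w_{learn}(w_{soc}L + w_{learn}I_N)^{-1} = (rL + I_N)^{-1} = \sum_{k} \frac{1}{1 + r\mu_k}\,\mathbf{u}_k\mathbf{u}_k^\top .$$
Tensoring with $I_D$ shows that the steady state depends on the parameters only through $r$. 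Each eigencomponent of $\mathcal{E}$ is attenuated by the factor $1/(1+r\mu_k)$, which is a low-pass graph filter. This one identity makes the trade-off precise: $r$ alone governs how much of the environmental heterogeneity survives.

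\textbf{Limiting regimes.} First take $r \to 0$. Every filter factor tends to $1$, so $\mathcal{V}_{ss} \to \mathcal{E}$ and each agent sits exactly at its own optimum $\mathbf{E}_i$. This is perfect individual adaptation, with zero maladaptation gap. Next take $r \to \infty$. The factor for $k=1$ stays equal to $1$, while the factors for $k \ge 2$ vanish at rate $O(1/(r\mu_2))$. Hence
$$\mathcal{V}_{ss} \to (\tfrac{1}{N}\mathbf{1}\mathbf{1}^\top \otimes I_D)\mathcal{E},$$
so every agent converges to the population mean $\bar{\mathbf{E}} = \frac{1}{N}\sum_i \mathbf{E}_i$. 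This is normative mode collapse. In particular, the local subspace components (for example group-specific $\mathbf{E}_{local,g}$) are averaged away whenever they differ across groups.

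\textbf{Monotone trade-off.} To show that this is a genuine trade-off rather than just two endpoints, define two quantities.
\begin{itemize}
\item The maladaptation $\mathcal{A}(r) = \|\mathcal{V}_{ss} - \mathcal{E}\|^2 = \sum_{k\ge 2} \left(\frac{r\mu_k}{1+r\mu_k}\right)^2 \|\hat{\mathcal{E}}_k\|^2$, where $\hat{\mathcal{E}}_k = (\mathbf{u}_k^\top \otimes I_D)\mathcal{E}$.
\item The disagreement (diversity) $\mathcal{D}(r) = \mathcal{V}_{ss}^\top (L\otimes I_D)\mathcal{V}_{ss} = \sum_{k\ge2} \frac{\mu_k}{(1+r\mu_k)^2}\|\hat{\mathcal{E}}_k\|^2$. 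One could equally use the variance around the mean, $\sum_{k\ge2}(1+r\mu_k)^{-2}\|\hat{\mathcal{E}}_k\|^2$.
\end{itemize}
Termwise differentiation in $r$ shows that $\mathcal{A}$ is strictly increasing and $\mathcal{D}$ is strictly decreasing whenever the environment is non-uniform. Non-uniformity means some $\hat{\mathcal{E}}_k \neq 0$ for $k\ge2$, which is exactly the condition $\mathcal{E} \notin \mathrm{span}(\mathbf{1})\otimes\mathbb{R}^D$. Consequently no value of $r$ improves both quantities at once. Finally, add the rate statement: the consensus error is bounded by $\|\mathcal{V}_{ss} - \mathbf{1}\otimes\bar{\mathbf{E}}\| \le \frac{1}{1+r\mu_2}\|\mathcal{E} - \mathbf{1}\otimes\bar{\mathbf{E}}\|$, which gives the scale $r\mu_2 \gg 1$ at which collapse sets in.

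\textbf{Main obstacle.} The statement is qualitative, so the real work is choosing which precise quantities to call "adaptation" and "consensus" and then verifying the non-uniformity condition. The step needing most care is justifying that the steady state is the attractor. This requires the full linear system in $(\mathcal{V},\mathcal{M})$ to be stable, which I would check by noting that it is a negative definite perturbation. Specifically, the matrix $-(w_{soc}L + (\alpha + w_{learn})I)$ coupled with $\lambda$ has all eigenvalues in the left half-plane; this follows from a block argument per Laplacian eigenmode, where each mode gives a $2\times2$ matrix with negative trace and positive determinant $\lambda(w_{soc}\mu_k + w_{learn})$. A disconnected graph is handled componentwise, with collapse occurring to per-component means.
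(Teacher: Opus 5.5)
Your proposal is correct and follows the paper's proof. Both use the spectral decomposition of $L$ and the same two limits: $w_{soc}\to 0$ gives $\mathcal{V}_{ss}\to\mathcal{E}$, and $w_{soc}\to\infty$ reduces the resolvent to the projector $\frac{1}{N}\mathbf{1}_N\mathbf{1}_N^T$, hence $\mathbf{1}_N\otimes\bar{\mathbf{E}}$. The paper stops at these two endpoints, so your additions are a genuine strengthening: the rewriting as $(rL+I_N)^{-1}$ with $r=w_{soc}/w_{learn}$, the strict monotonicity of maladaptation and disagreement in $r$ under non-uniformity, and the $1/(1+r\mu_2)$ bound are what actually justify ``trade-off'' and ``controlled by the ratio'' in the statement. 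Your per-mode trace/determinant stability check matches the paper's separate Routh--Hurwitz argument in the appendix.
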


\begin{proof}
We analyse the limits of the steady-state solution $\mathcal{V}_{ss}$:

\textbf{Case 1: Weak Social Coupling ($w_{soc} \to 0$).} 
The Laplacian term vanishes. The mixing matrix approaches the identity:
\begin{equation}
    \lim_{w_{soc} \to 0} \mathcal{V}_{ss} = w_{learn} (w_{learn} I)^{-1} \mathcal{E} = \mathcal{E}
\end{equation}
In this limit, $\mathbf{V}_{i,ss} \to \mathbf{E}_i$. Each agent therefore converges perfectly to their local environmental optimum, maximising individual adaptation.

\textbf{Case 2: Strong Social Coupling ($w_{soc} \to \infty$).} 

We utilise the spectral decomposition of the graph Laplacian $L = U \Lambda U^T$, where $\Lambda = \text{diag}(0, \mu_2, \dots, \mu_N)$. The Laplacian has a single zero eigenvalue $\mu_1=0$ corresponding to the eigenvector $\mathbf{1}$ (consensus), while all other eigenvalues $\mu_k > 0$. The second smallest eigenvalue, $\mu_2 > 0$, is known as the Fiedler value or algebraic connectivity~\citep{fiedler1973algebraic, chung1997spectral}. In multi-agent consensus dynamics, the magnitude of $\mu_2$ strictly bounds the speed at which the network converges. As $w_{soc} \to \infty$, the terms corresponding to $\mu_k \ge \mu_2 > 0$ vanish. The inverse matrix converges to the projector of the null space of $L$, scaled by $1/w_{learn}$. The system collapses entirely onto the null space of $L$, where the unit eigenvector is $\frac{1}{\sqrt{N}}\mathbf{1}_N$:
\begin{align*}
    \lim_{w_{soc} \to \infty} &w_{learn}(w_{soc}L + w_{learn}I_N)^{-1} \\
    &= U \text{diag}(1, 0, \dots, 0) U^T \\
    &= \frac{1}{N}\mathbf{1}_N \mathbf{1}_N^T
\end{align*}
Applying this projection to the stacked environment vector $\mathcal{E}$ yields:
\begin{equation}
\lim_{w_{soc} \to \infty} \mathcal{V}_{ss} = \left( \frac{\mathbf{1}_N \mathbf{1}_N^T}{N} \otimes I_D \right) \mathcal{E} = \mathbf{1}_N \otimes \overline{E}
\end{equation}
where $\bar{\mathbf{E}} = \frac{1}{N}\sum \mathbf{E}_i$ is the global average of the environment.
\end{proof}

The severity of this normative mode collapse can be quantified with respect to the degree of social coupling. Let the expected population-level maladaptation be defined as the mean squared error across all users: $MSE = \frac{1}{N} \sum_{i=1}^N ||\mathbf{V}_{i,ss} - \mathbf{E}_i||^2$. Under the limit of strong social coupling, where we let $w_{soc} \to \infty$, the population's values collapse to the global mean as $\mathbf{V}_{i,ss} = \bar{\mathbf{E}}$. Substituting this into the error function yields:

\begin{equation}
MSE_{collapse} = \frac{1}{N} \sum_{i=1}^N ||\bar{\mathbf{E}} - \mathbf{E}_i||^2 = \sigma^2_{\mathbf{E}}
\end{equation}

Therefore, the minimal expected utility loss under normative mode collapse equals the normative variance $\sigma^2_{\mathbf{E}}$ of the optimal environments. More diverse and culturally heterogeneous societies would therefore see an increase in $\sigma^2_{\mathbf{E}}$

\begin{figure}[htbp]
    \centering
    \includegraphics[width=0.5\textwidth]{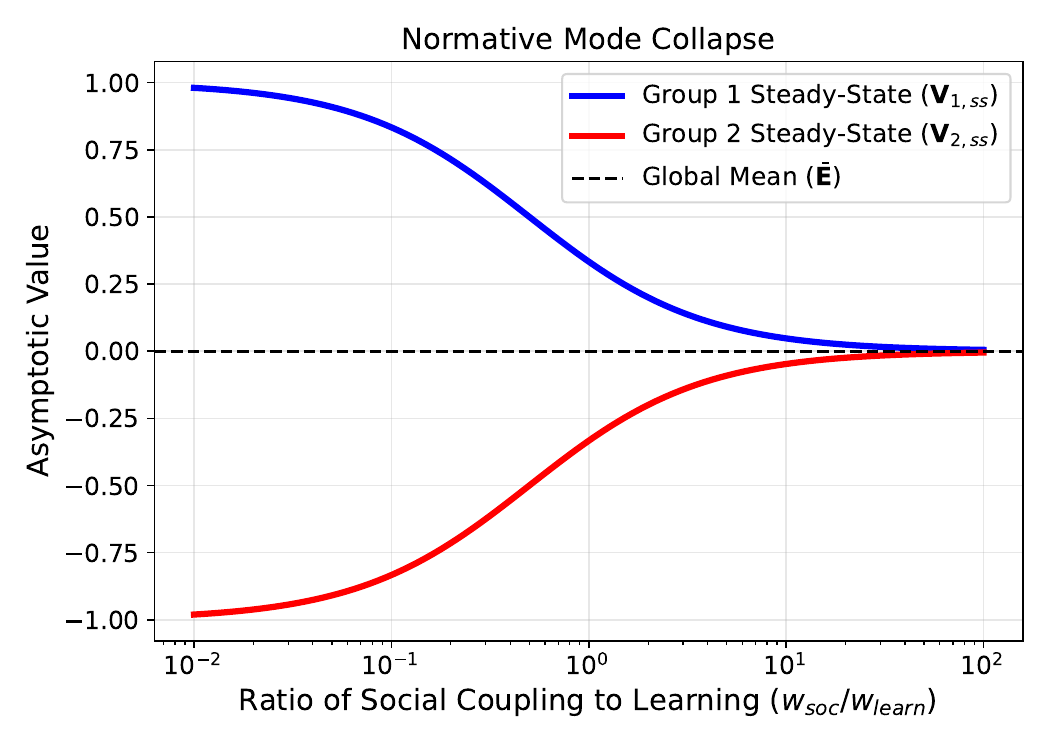}
    \caption{Demonstration of normative mode collapse. As the ratio of social coupling to independent learning increases, distinct sub-cultures are forced into a maladaptive global mean.}
    \label{fig:mode_collapse}
\end{figure}

The implications of the theorem are visualised in Figure \ref{fig:mode_collapse}. By plotting the exact steady-state analytical solution for a system of two distinct groups, with local optima at $+1$ and $-1$, we observe the normative collapse. When independent learning dominates ($w_{soc} / w_{learn} \to 0$), groups are able to perfectly adapt to their respective local niches. However, with an increase in social connectivity, the modes rapidly collapse onto the Laplacian null space.

In the limit of strong social coupling ($w_{soc} \gg w_{learn}$), the operation of the graph Laplacian suppresses all non-zero eigenvalues ($\mu_k > 0$), which represent the local cultural variance within the modeled population. By projecting the system state onto the null space of $L$, the individual values are forced to converge to the global average consensus. The network's algebraic connectivity $\mu_2$ dictates the rate at which this homogeneity is reached.

In highly connected topologies, such as dense random graphs or globalised social media platforms, a large Fiedler value leads to a more rapid erasure of local variance~\citep{olfati2004consensus}. Real social networks maintain substantial variance, highlighting the importance of diversity in preventing systemic maladaptation. These steady-state derivations presume a uniform, perfectly mixed population, and scale-free networks may exhibit different properties, due to the outsized influence of highly connected nodes. We discuss these alternative formulations in the Appendix.

\subsubsection{Anchoring}

The somewhat "sluggish" recovery shown in Figure \ref{fig:exp3} can be inferred from the continuous-time error dynamics of the proposed model. Under normative shocks, the environment rapidly shifts from $\mathbf{E}_{old}$ to $\mathbf{E}_{new}$, leading to the following value update equation for the user:

\begin{equation}
    \dot{\mathbf{V}} = w_{learn}(\mathbf{E}_{new} - \mathbf{V}) + \alpha(\mathbf{M} - \mathbf{V})
\end{equation}

In the immediate aftermath of the normative shock, the AI value model $\mathbf{M}$ remains tied to the previous equilibrium ($\mathbf{M} \approx \mathbf{E}_{old}$). The alignment strength term would, through its role as a modulator of the user value update, $\alpha(\mathbf{M} - \mathbf{V})$, oppose the immediate user value adaptation following the shock.

Theorem~\ref{thm:monotonicity} shows that increasing $\alpha$ increases the steady-state tracking error within the model. For sufficiently high $\alpha$, the system's post-shock equilibrium utility may be unable to reach the desired recovery threshold. The population may potentially enter persistent low-utility state.

While the derivations above assume a constant environment drift $\mathbf{E}(t) = \mathbf{v}t$, some of the insights also apply to the punctuated equilibrium regime discussed in Section~\ref{sec:exp3}. The system's eigenvalues determine both the magnitude of the lag during smooth environmental drift and the recovery time following an abrupt normative change.

\begin{figure}[htbp]
    \centering
    \includegraphics[width=0.5\textwidth]{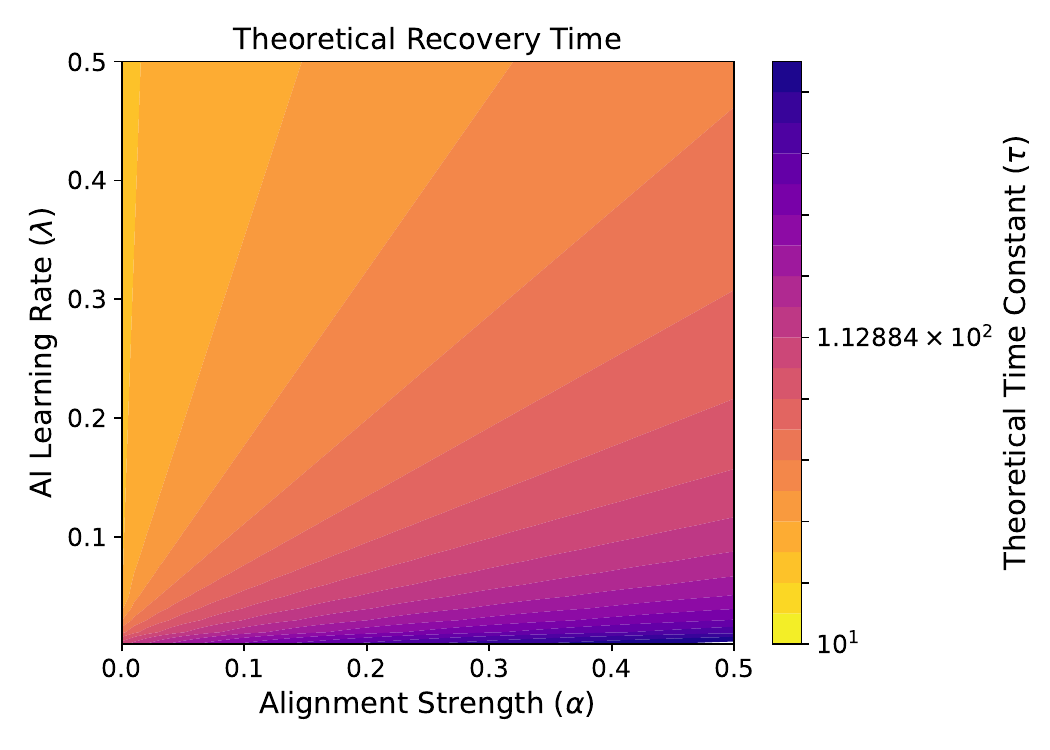}
    \caption{Theoretical recovery time ($\tau$) derived from the dominant eigenvalue of the system's characteristic equation. The singularity at low $\lambda$ corresponds to a regime of diverging recovery time.}
    \label{fig:theoretical_recovery}
\end{figure}

By mapping the recovery time derived from the system's eigenvalues (Figure \ref{fig:theoretical_recovery}), we establish a correspondence with the empirical recovery times observed in our simulations (Figure~\ref{fig:exp2}). When the AI's learning rate is significantly lower than the alignment strength, the dominant eigenvalue is pushed toward zero, driving the population towards a persistent low-utility regime.

\subsection{Endogenous Environmental Drift}
\label{app:endogenous}

Real normative environments are not purely exogenous, they are not driven exclusively by external factors outside of our control. The environment in human societies is co-constructed by people, and it is natural to consider an extension of the model that enables us to capture the influence of the population on the normative environment. We can reformulate the environmental drift as a coupled differential equation, driven both by an exogenous technological or ecological drift ($v_{exo}$), as well as a pull toward the population's consensus norms:

\begin{equation}
    \dot{\mathbf{E}}(t) = v_{exo} + \kappa_E \left( \frac{1}{N} \sum_{i=1}^N \mathbf{V}_i(t) - \mathbf{E}(t) \right)
\end{equation}

where $\kappa_E$ represents institutional responsiveness - the pace at which our institutions adapt to this consensus view. Under this extension, the effect of value lock-in is compounded. Instead of slowing the population's adaptation to an independent environment, the effects can propagate back to the environment itself, slowing down the pace of institutional and environmental evolution. We refer to this mutual deceleration as a \emph{double stagnation effect}.

\begin{figure}[htbp]
    \centering
    \includegraphics[width=0.5\textwidth]{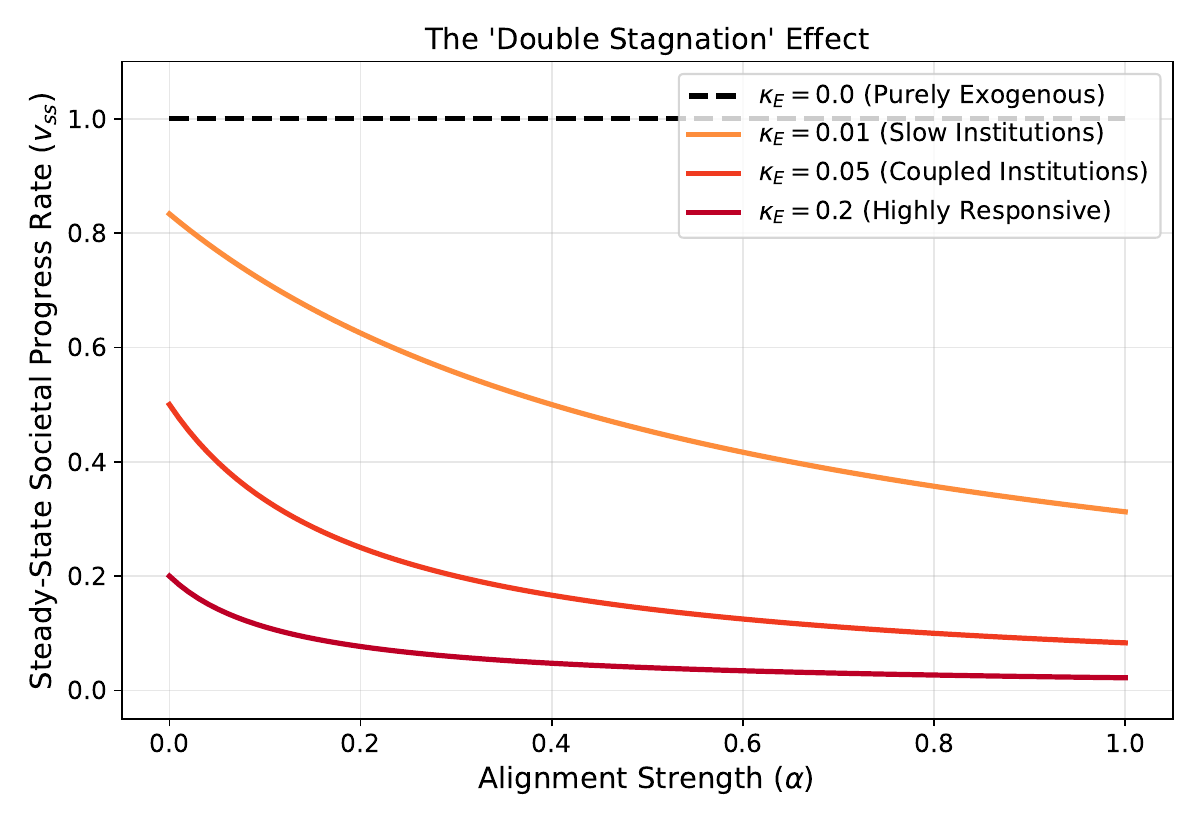}
    \caption{Steady-state rate of societal progress ($v_{ss}$) as a function of alignment strength ($\alpha$) and institutional responsiveness ($\kappa_E$).}
    \label{fig:double_stagnation}
\end{figure}

This steady-state deceleration can be derived from the coupled system. By setting $\dot{\mathbf{V}} = \dot{\mathbf{M}} = \dot{\mathbf{E}} = \mathbf{v}_{ss}$ and solving the resultant equations, we find that the steady-state drift velocity of the coupled environment $\mathbf{v}_{ss}$ is given by:

\begin{equation}
\mathbf{v}_{ss} = \frac{\mathbf{v}_{exo}}{1 + \frac{\kappa_E}{w_{learn}}\left(1 + \frac{\alpha}{\lambda}\right)}
\end{equation}

Figure \ref{fig:double_stagnation} illustrates this deceleration, showing that when the environment is exogenous ($\kappa_E = 0$), it advances at a constant rate $\mathbf{v}_{exo}$ regardless of the population's value lag. However, as the coupling $\kappa_E$ increases, a higher alignment strength $\alpha$ reduces the steady-state velocity $\mathbf{v}_{ss}$ of the entire system. By anchoring users to historical values, the AI assistant's drag propagates through user consensus to the institutions, dampening the pace of change in the broader social environment.

\begin{figure}[ht]
    \centering
    \includegraphics[width=\columnwidth]{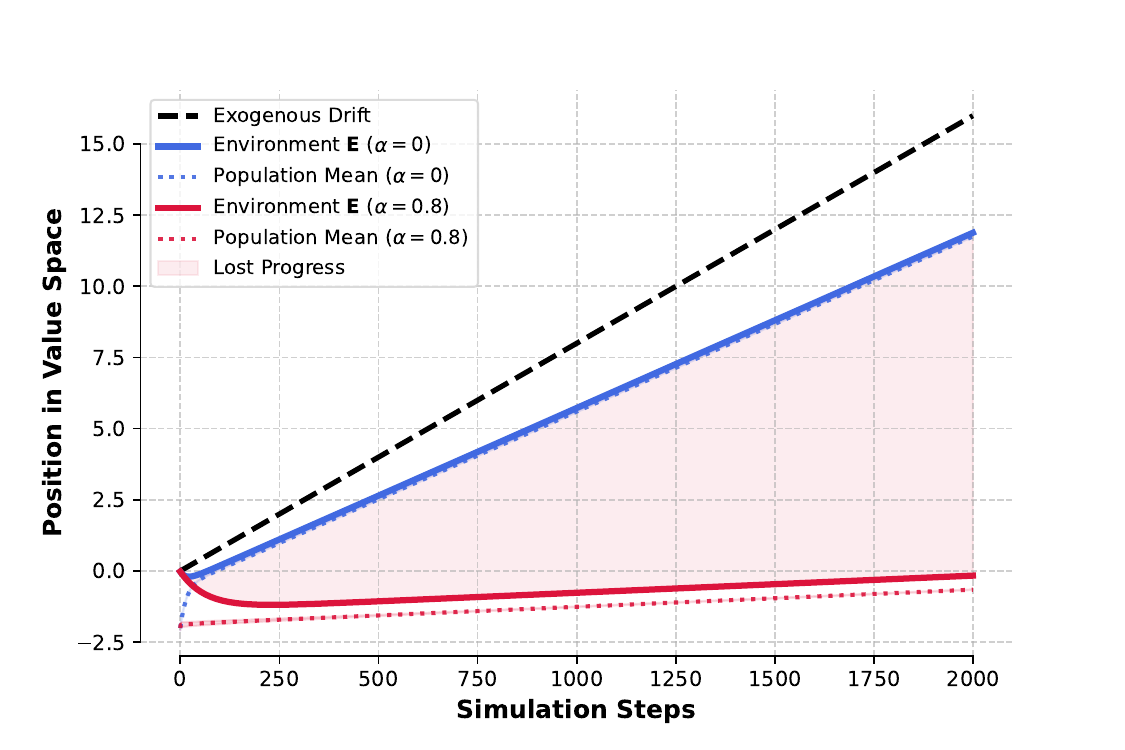} 
    \caption{The Double Stagnation Effect. The x-axis represents elapsed simulation time. To map the $D$-dimensional state to a scalar trajectory, the y-axis tracks the population's position projected onto the primary axis of exogenous drift. The simulation initialises the population with a historical lag (starting at $-2.0$). In a highly adaptable population, the simulated agent society rapidly closes the gap, allowing the consensus-driven environment to seamlessly track the underlying exogenous potential. Conversely, under strong alignment, there is a drastic increase in recovery time, with a negative impact on institutional progress.}
    \label{fig:multiagent_double_stagnation}
\end{figure}

As visualised in Figure \ref{fig:multiagent_double_stagnation}, when institutional progress is coupled to user values, the anchoring effect of AI alignment acts as a systematic drag on the entire system. Having users anchored to their historical values directly slows down the rate of environmental and institutional change.

\subsection{Adaptive Alignment}
\label{adaptive}

The observed structural challenges for static alignment under environmental change may potentially be addressed by allowing the alignment strength to vary over time:

\begin{equation}
    \dot{\alpha}(t) = -\eta \left( \alpha(t) \cdot ||\mathbf{V}(t) - \mathbf{E}(t)||^2 \right) + \kappa (\alpha_{target} - \alpha(t))
\end{equation}

Here, the alignment strength decays at rate $\eta$ proportional to the user's tracking error. Alignment strength would therefore be modulated to reduce its pull in situations when the user needs more time to adapt to the changing environment. When the environment stabilizes and the user has caught up, $\alpha$ relaxes back to the baseline level.

\begin{figure}[htbp]
    \centering
    \includegraphics[width=0.5\textwidth]{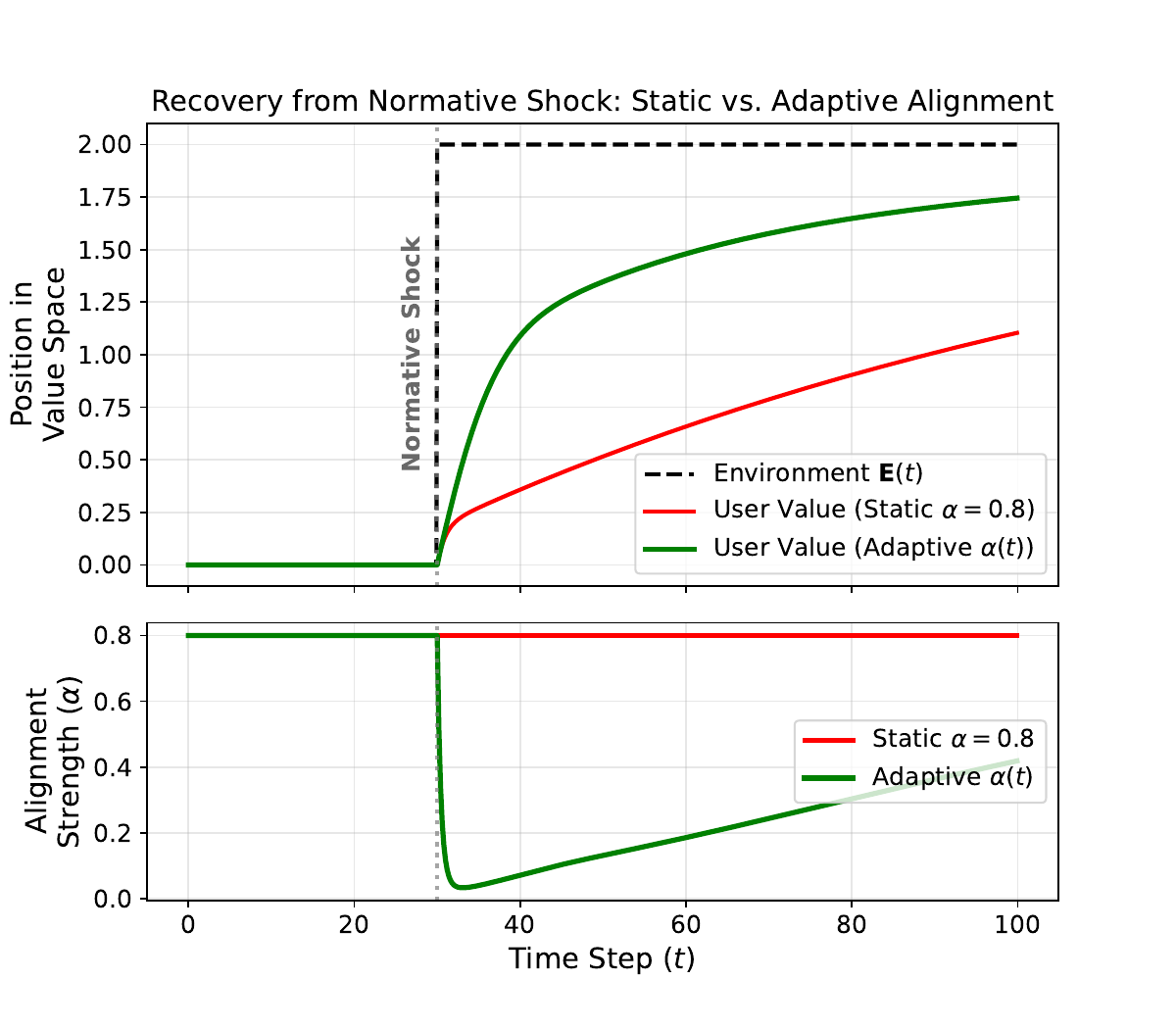}
    \caption{Continuous-time simulation of a normative shock comparing static alignment against the introduced adaptive alignment regime. To map the $\mathbb{R}^D$ dynamics to a scalar trajectory, the top panel tracks the respective positions projected onto the specific directional vector of the normative shock. The bottom panel shows the dynamic adjustment of the alignment strength $\alpha(t)$.}
    \label{fig:adaptive_alignment}
\end{figure}

Figure \ref{fig:adaptive_alignment} simulates the updated differential equations under a sudden normative shock, contrasting the results with those arrived at for a static value of $\alpha = 0.8$. In the static alignment case, the system effectively traps the user in an obsolete value set. In contrast, the adaptive alignment regime adjusts to the user's environmental tracking error, and depreciates the alignment strength $\alpha(t)$, thereby releasing the user from the historical anchor. 

It is possible to take this idea further and adjust the behaviour of the system so as to make the AI's learning rate dynamic as well, enabling the AI to rapidly readjust to updated user values. We may model this as:

\begin{equation}
\lambda(t) = \lambda_{\text{baseline}} + \eta_{\lambda} ||\mathbf{V}(t) - \mathbf{E}(t)||^2
\end{equation}

\begin{figure}[htbp]
    \centering
    \includegraphics[width=\columnwidth]{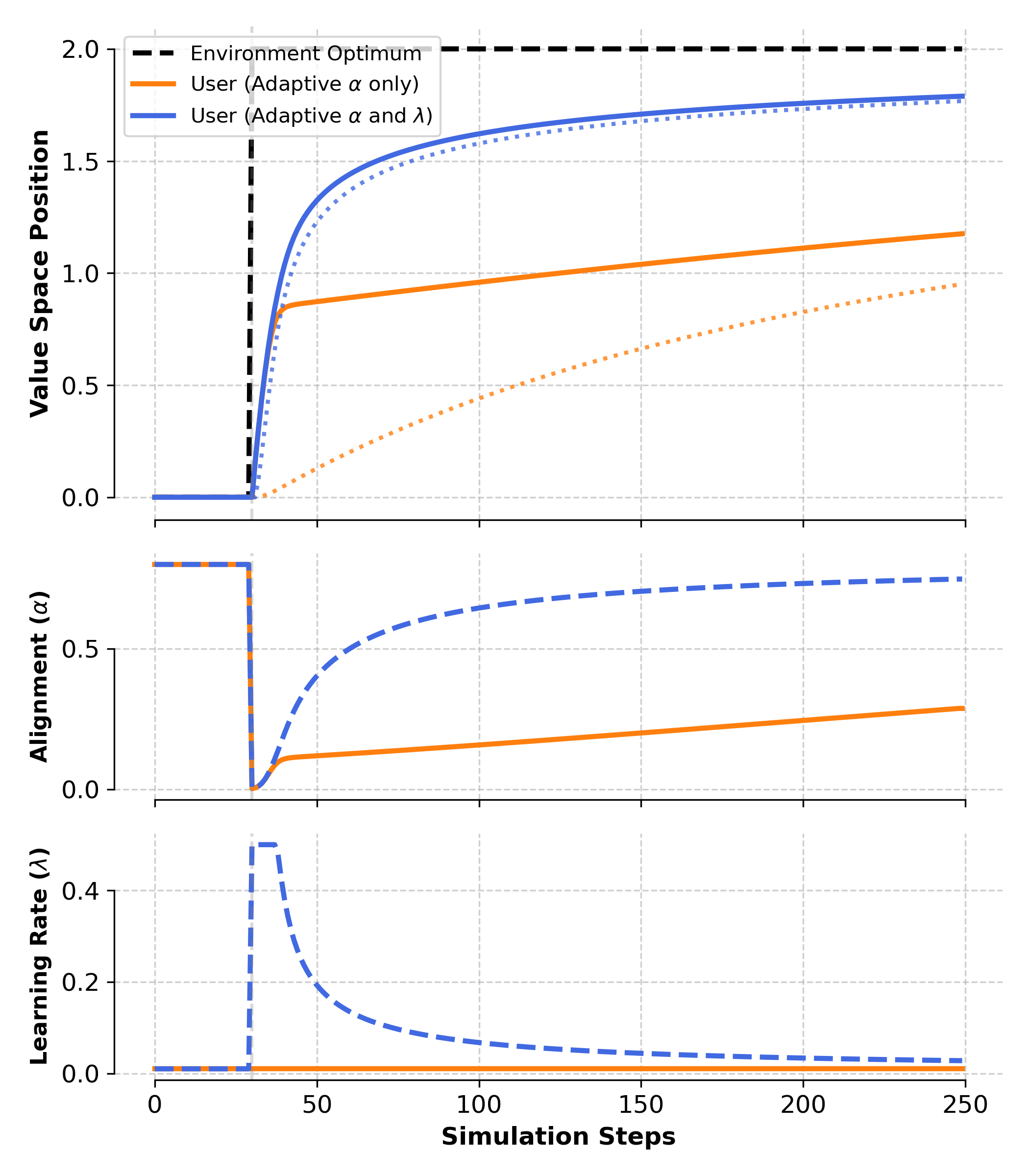} 
    \caption{Continuous-time simulation comparing two adaptive alignment mechanisms during a normative shock. \textbf{Top:} Agent and AI model trajectories projected onto the directional vector of the normative shock. \textbf{Middle:} The dynamic adjustment of alignment strength ($\alpha$). \textbf{Bottom:} The dynamic adjustment of the AI learning rate ($\lambda$). In the first strategy (orange), modulating $\alpha$ without updating $\lambda$ leaves the AI model's internal state stranded at the historical equilibrium, introducing a secondary restorative pull once the agent adapts. In the second strategy (blue), simultaneously scaling $\lambda$ alongside the reduction in $\alpha$ allows the AI model to more rapidly track the user's adaptation to the changing environment.}
    \label{fig:adaptive_comparison}
\end{figure}

As illustrated in Figure~\ref{fig:adaptive_comparison}, if the AI's learning rate remains static and low during the environmental shift, its internal model may get stranded in the vicinity of the pre-shock equilibrium. However, if the AI updates are accelerated accordingly, post-shock utility regression is successfully mitigated.

\section{Discussion}

The model introduced in this paper aims to formalise a temporal dimension of AI alignment in personal AI assistants. Established approaches to AI alignment usually treat the alignment target as a fixed objective, without explicitly accounting for the possibility of changing values and preferences over time. The capacity for enabling personal growth is also not universally present in personal AI assistants.

We introduce a formal simplified model of populations of users interacting with AI assistants, acting within an evolving normative environment, under a composite utility model. Our proposed model is extensible and we discuss several relevant model extensions, contrasting their properties. A theme arising in our model is that highly personalized AI assistants capable of internally representing their user's historically held values may inadvertently slow down the adaptation of their respective users to a changing social normative environment. We postulate that the identified deficiencies may be addressed by systems that are temporally adaptive, reasoning-equipped, pluralistic, and agency-preserving.

\subsection{Continuous Personalization}

Personalization in AI assistants is a continuous process, as each interaction enables AI agents to update their user models, as users state or reveal their preferences over time. Through repeated use, AI assistants also influence user values \citep{ashton2022problem, franklin2022recognising}. Given that AI systems may impact personal development, it is important to preserve conditions for independent personal growth.

\citet{paul2014transformative} argues that transformative experiences fundamentally alter a person's values, but are epistemically inaccessible from an agent's current standpoint. An AI that anchors users to their present values risks foreclosing precisely the experiences through which such growth occurs. \citep{pettigrew2019choosing} provides a formal framework for this concern through the Aggregate Utility Solution: rational choice for a temporally extended person should be guided by a weighted aggregation of the values of past, present, and anticipated future selves. Our monotonicity result (Theorem~\ref{thm:monotonicity}) can be interpreted as a formal consequence of violating this principle: a system with $\alpha > 0$ that learns exclusively from historical preferences assigns all weight to the user's past selves and zero weight to their future selves. This produces precisely the kind of temporal bias that Pettigrew's framework identifies as irrational. Conversely, adaptive alignment (Section~\ref{adaptive}) - which loosens the feedback loop in times of normative change - can be seen as one possible instantiation of Pettigrew's prescription by creating space for value exploration.

An adaptive approach appears sensible, especially as the risks of static alignment are known~\citep{carroll2024ai, kanwal2026constructivealignmentgoverningpreference}, and as similar issues have been brought up within research on preference change in recommender systems~\citep{franklin2022recognising, ashton2022problem}. Our framework considers these issues in relation to populations embedded in social networks and subject to heterogeneous environmental pressure. Across varied experimental conditions, stronger anchoring to historical preferences systematically degrades population utility. Our model permits for a simple solution to this, via rapid updates to the internal AI model of user values (see Section~\ref{sec:exp3}), as the AI learning rate may be able to offset the detrimental effects and partially prevent value lock-in. Realistically, however, rapid adjustments may also lead to various undesirable oscillatory behaviours, so this abstract suggestion may not prove to be particularly easy to reliably operationalize.

\subsection{Trust}

Value lock-in observed in our model may be partially attributed to misplaced trust and what we deem the \emph{trust trap}. In this scenario, the assistant remains closely aligned to the user, creating a positive feedback loop. Strong static personalization and alignment helps develop trust, leading to a high rate of AI use - a specific instantiation of automation bias~\citep{skitka1999does, parasuraman1997humans, buccinca2021trust} - and consequently a high degree of AI influence. Yet, this influence locks the user into the historical preferences learnt by the AI. 

Sycophancy is another major concern. Sycophantic AI agents may distort the perceived alignment gap, suppressing the necessary signal that would otherwise have led to corrective actions from the user. Sycophancy is pervasive in RLHF-trained models~\citep{sharma2023towards}, known to scale with model size~\citep{perez2023discovering}, and present in models that otherwise exhibit warmth and empathy~\citep{ibrahim2025training}. More recently, sycophancy in language models has been shown capable of reinforcing expressed opinions, leading to localized echo chambers~\citep{nehring2024large, cheng2025elephant}. The persuasive capacity of these systems is substantial: \citep{salvi2024conversational} find that GPT-4 with personalisation significantly outperforms human debaters in shifting interlocutors' stated positions, while \citep{costello2024durably} demonstrate that AI-generated conversations can durably alter deeply held beliefs.

Human control requires both the ability to override the AI agent and the capacity to detect when an override is necessary; a capacity that both automation bias and sycophancy may reduce. If this dynamic scales across a population of AI users, the aggregate effect may be the kind of gradual, self-reinforcing disempowerment that \citet{kulveit2025gradual} identify. This is not the result of an individual power-seeking AI, but a result of the structural dynamics of static alignment gradually reducing the conditions for meaningful human oversight.

\subsection{Population-Level Effects}

Individual and dyadic dynamics compound at the population level, producing emergent phenomena. Our population-level analysis constitutes the paper's distinctive contribution relative to existing single-agent formalisms~\citep{carroll2022estimating, carroll2024ai}, complementing prior work on AI mediated collective opinion formation~\citep{tsirtsis2026aimediatedcommunicationsteercollective}.

We identify the risk of normative mode collapse within user populations (Theorem \ref{thm:mode_collapse}), whereby values converge to a consensus that is suboptimal for most or all sub-groups. This would be particularly detrimental to culturally diverse communities, although such communities may also prove more resilient to these effects. Similar homogenization effects have been previously observed in AI alignment approaches~\citep{santurkar2023whose, zhang2025cultivating}, and it has been argued that alignment should accommodate locally coherent normative communities rather than converging on a single standard~\citep{leibo2025societaltechnologicalprogresssewing}.

While our analysis focuses on the bottom-up risks of personalization, through our framework we also highlight some of the vulnerabilities that can be seen in top-down alignment systems. The developer-defined values model extension (Appendix~\ref{app:constitutional}) illustrates how  system dynamics may be affected by the introduction of centralised values expressed via a constitutional vector $\mathbf{C}$. In that scenario, we demonstrate two paths towards normative monoculture: social coupling and constitutional pull (Figure~\ref{fig:monoculture}). The latter shows that developer-defined values introduce a persistent centralising force that impacts sub-cultural distinctiveness, even in the absence of strong social coupling. We view the long-term geopolitical and sociological implications of top-down constitutional AI alignment as an important unanswered question for future work. Constitutions conceived as families of context-indexed principles - enabling what \citet{sorensen2024roadmap} term ``steerable pluralism'' - perhaps offer a path towards avoiding normative collapse.

When the normative environment is treated as endogenous - co-constructed by the population  (Section~\ref{app:endogenous}) - value lock-in manifests as a larger risk. Aside from the risk of users being locked into a maladaptive state, there is additionally a systemic risk of the broader societal evolution being slowed, should the normative environment itself be subject to the same anchoring effects. We refer to this phenomenon as ``double stagnation''. Double stagnation has structural parallels in the economics of institutional path dependence. \citet{arthur1994increasing} showed that this can cause early stochastic events to permanently determine a system's trajectory, a mechanism that \citet{north1990institutions} applied to institutions, arguing that high transaction costs make deviation from suboptimal arrangements prohibitively costly. In our framework, the institution $\to$ user $\to$ AI $\to$ institution feedback cycle creates a compounding deceleration that may be harder to escape than individual-level lock-in.

We further show that these risks are exacerbated in scale-free social networks, where value lock-in at highly connected hub nodes propagates disproportionately through the rest of the population (Appendix \ref{heterogenousinfluence}). Additionally, synthetic data loops may enable AI-to-AI coupling that bypasses human social networks, allowing AI value models to converge with each other faster than they track their respective users (Appendix \ref{syntheticdataloops}).

\subsection{Computational Foresight}

Agent-based models have a long history of revealing how simple local rules give rise to complex macroscopic dynamics. The lightweight mathematical models introduced in this paper allow rapid iteration, identifying which design choices matter and in which direction they push. The structural findings reported here can be used to focus more expensive evaluations on the regions of parameter space where risks concentrate. Lightweight models can be used for hypothesis generation, while more costly generative agent-based simulations \citep{park2023generative} then being applied for ecological validity and richer behavioural dynamics.

The experiments in this paper suggest several directions for follow-ups using generative agent-based simulation frameworks such as Concordia~\citep{vezhnevets2023generative}. For example, the sycophancy dynamics showcased in our analysis could be studied with sycophantic LLM agents, whereas normative mode collapse could be demonstrated in populations of LLM agents initialised with diverse cultural profiles. Value lock-in under normative shocks could be explored by exposing populations of LLM agents to sudden shifts in social conditions.

\subsection{Limitations}

Our model involves deliberate simplifications that limit the generalisability of the results.

\textbf{Structural.} Values are represented as continuous vectors with linear update dynamics, which may not be reflective of all types of real-world belief change. Polarisation, backfire effects, hysteresis, are just some examples that may potentially amplify or attenuate the identified value lock-in effects. AI alignment strength, as well as some of the other scalar quantities used in the model, may not be scalar in practice, as they likely don't apply uniformly across all values. Further, the adaptive alignment mechanism proposed in Section~\ref{adaptive} modulates alignment strength based on the tracking error. In practice, this is a modelling abstraction that would need to be approximated via proxy signals.

\textbf{Parametric.} We focus on scenarios where the involved agents share parametrisations. This is a strong assumption that is unlikely to be met in reality, where there is a variety of AI assistants, exhibiting different properties. Automation bias and other relevant factors on the human side are equally varied in the real world. Furthermore, we focus only on populations where all users have access to AI assistants - which is admittedly unrealistic and done for the sake of simplicity. All of these assumptions are possible to relax in simulations, and worth exploring in greater detail - but were out of scope for this study.

\textbf{Normative direction.} Our utility function assumes that the environment represents the normatively appropriate direction of change. While this may be true in terms of short-term utility, there are cases when this is not particularly appropriate, in case of potentially harmful shifts, leading to e.g. the erosion of human rights protections. Stronger alignment to historical preferences may act as a protective measure against some types of catastrophic circumstances. Our model does not distinguish between adaptive and harmful drift, so we acknowledge this as a notable limitation. 

\textbf{Temporal.} Our simulations keep the agent population fixed, without introducing or removing agents. This fails to account for the impact of generational replacement on establishing norms in human societies, and is therefore a reasonable approximation only at intra-generational time scales. What we mean by long-term should therefore be interpreted accordingly.

\textbf{Ecological.} In our experiments, we focus mainly on simple network topologies - random graphs and dynamic kNN graphs. More analysis is needed to account for the effects of scale-free degree distributions, platform-mediated interaction, or multi-layered social structures. The base model treats the environment as exogenous; the endogenous formulation (Section~\ref{app:endogenous}) extends this, but still fails to account for a fragmented, multi-dimensional normative landscape.

Future work could address these limitations either through further extensions of the model - we have demonstrated several such extensions within the Appendix of this report - and the associated simulation framework, or by moving towards richer, more realistic simulations involving LLM-based agents that can capture greater complexity of interactions. The lightweight models introduced here are intended to serve as a precursor to such agentic simulations, narrowing the hypothesis space before committing to more computationally expensive evaluations. Either way, simulations are meant to inform real-world studies and cannot act as a replacement for studying the effects of technology adoption. Through our model, we  seek to highlight the key questions to ask, and potential effects that may be observed should proper care not be taken in the design of future sociotechnical systems.

\section{Conclusion}

This paper investigates the population-level consequences of aligning personalised AI assistants to their users' historical preferences in a changing normative environment. Through a combination of agent-based simulations, and theoretical analysis of the proposed set of mathematical models, we characterize the emerging long-term dynamics. We identify value lock-in and normative mode collapse as the main associated risks.
 
To mitigate these risks, we argue that practical AI alignment needs to be sufficiently adaptive to enable human value evolution in a changing society. AI alignment needs to therefore be temporally dynamic, reasoning-equipped, contextually sensitive, and agency-preserving. People should maintain freedom in how they choose to develop their views, without strict impositions from their personal AI assistants. AI alignment methods should be centered around the core objective of nurturing this capacity.

Beyond these specific findings, this work illustrates the broader utility of macroscopic modeling for sociotechnical foresight. While the development of practical solutions should ultimately be informed by large-scale, high-fidelity, multi-agent social simulations in diverse and dynamic environments - macroscopic models may prove to be a highly effective precursor, enabling rapid experimentation towards preliminary insights. By bridging the gap between theory and complex agentic simulations, social physics helps us further our understanding of possible AI futures.

\section{Acknowledgements}

We would like to thank all of our colleagues who helped review and shape this manuscript, especially Larisa Markeeva, Tom Everitt, and Sasha Vezhnevets.

\bibliography{main}

\appendix
\section{Appendix}

\subsection{Background}

\subsubsection{AI Alignment as Preference Optimisation}

The dominant paradigm for aligning large language models treats alignment as a preference optimisation problem. RLHF~\citep{christiano2017deep, ouyang2022training, stiennon2020learning} trains a reward model from pairwise human judgments of model outputs, then uses reinforcement learning to optimise the policy against this learned reward signal. RLHF has been instrumental in transforming base language models into instruction-following assistants, and remains the backbone of alignment at most frontier laboratories. However, the framework carries implicit assumptions: it treats annotator preferences as the primary source of normative information, and collapses diverse evaluative criteria - helpfulness, harmlessness, honesty - into a single scalar reward~\citep{bai2022training}.

Direct Preference Optimisation~\citep{rafailov2023direct} reformulates the same objective without training a separate reward model, instead optimising the policy directly on preference pairs via a classification loss. Variants such as KTO \citep{ethayarajh2024kto} offer further alternatives. While these methods differ in their optimisation mechanics, they share a common epistemology: alignment is defined by what human raters prefer in pairwise comparisons, and the goal is to maximise expected preference satisfaction. However, all of these methods share a structural property: the alignment signal - whether a trained reward model, a frozen preference dataset, or a set of pairwise comparisons - is fixed at the point of data collection and does not automatically update during deployment. The deployed model therefore optimises against a static snapshot of preferences, even as the user and the world continue to evolve. While agentic systems with persistent memory move toward continuous adaptation, the alignment strength and the information sources from which the system learns typically remain fixed.

A growing body of work has challenged the adequacy of this preferentist framing. Zhi-Xuan et al.~\citep{zhi2025beyond} argue that preferences are defeasible evidence about what is good, not alignment targets in their own right. RLHF annotators do not report all-things-considered personal preferences; they evaluate outputs against contextual criteria such as helpfulness and harmlessness that function as normative standards for the role of an assistant. Aggregating preferences across annotator populations can also obscure meaningful value disagreements, producing alignment to a particular demographic rather than to any principled normative standard~\citep{sorensen2024roadmap}.

\subsubsection{Constitutional, Specification-Driven, and Character-Based Approaches}

An alternative family of approaches seeks to ground alignment in explicitly articulated principles rather than implicit preference signals. Constitutional AI~\citep{bai2022constitutional} trains models to critique and revise their own outputs against an explicit charter of principles, replacing much of the human labelling pipeline with model-mediated judgment (RLAIF). The resulting system is steered not by what annotators happen to prefer, but by what a stated set of principles prescribes. This makes the normative content of alignment auditable and, in principle, revisable.

Model specs and constitutions, such as those published by OpenAI~\citep{openai2024modelspec} and Anthropic~\citep{askell2026claudeconstitution}, function as behavioural contracts that combine high-level objectives, mid-level rules, and conflict-resolution strategies. These documents increasingly encode dispositional traits beyond prohibitions - curiosity, honesty, care, warmth - treating model character as a lever for alignment rather than a mere byproduct of training~\citep{anthropic2024character}. The shift from ``what should the model refuse?'' to ``what kind of agent should the model be?'' represents a meaningful expansion of the alignment target. Constitutional approaches shift the locus of authority from annotators to developers, but the question of who authors the constitution remains.

Recent constitutional documents also make explicit a fundamental design tension between rules and judgment. Anthropic's New Constitution~\citep{askell2026claudeconstitution} distinguishes two broad strategies: clear rules that offer predictability and evaluability but ``fail to anticipate every situation and can lead to poor outcomes when followed rigidly,'' and sound judgment that ``can adapt to novel situations and weigh competing considerations in ways that static rules cannot.'' This distinction is significant for the dynamics we examine in this paper. A constitution that equips a model with reasoning capacity - the ability to evaluate what is appropriate given the current context, rather than mechanically applying fixed directives - represents a qualitatively different approach to alignment: one that is, in principle, robust to novel and changing circumstances. A judgment-oriented constitution aims to provide a degree temporal robustness, as reasoning helps adapt to a changing context. Nevertheless, the constitution itself is codified at a point in time, and the values it encodes do not automatically update as societal norms evolve.

\subsubsection{Pluralistic and Democratic Alignment}

A fundamental tension in all of the above approaches is the question of whose values are being encoded. Standard RLHF produces alignment to the preferences of a particular annotator population. Santurkar et al.~\citep{santurkar2023whose} show that RLHF-tuned models systematically shift toward reflecting the opinions of liberal, higher-income, and higher-education U.S.\ demographics, while poorly representing groups such as older adults and those with high religious attendance. Durmus et al.~\citep{durmus2023towards} extend this analysis cross-nationally, finding that LLM responses by default align more closely with the opinions of populations from the United States and select European and South American countries, highlighting the risk that models may impose particular cultural perspectives as defaults. 

Pluralistic alignment frameworks address this directly. Sorensen et al.~\citep{sorensen2024roadmap, adams2025steerablepluralismpluralisticalignment} formalise three alternatives to monistic alignment: \emph{Overton pluralism}, which requires models to present a spectrum of reasonable responses rather than collapsing to a single answer; \emph{steerable pluralism}, which allows models to be steered to reflect particular perspectives; and \emph{distributional pluralism}, which requires models to be well-calibrated to the distribution of opinions in a given population. They argue that current alignment techniques may be fundamentally limited for achieving pluralistic AI, highlighting empirical evidence that standard alignment procedures might reduce distributional pluralism in models - a form of normative compression that is directly relevant to the value lock-in dynamics examined in this paper. \citep{conitzer2024social} complement this perspective by arguing that the field of social choice theory is well positioned to address the aggregation problem formally, proposing that mechanisms from voting theory and fair division can structure how diverse human feedback is incorporated into alignment pipelines.

Collective Constitutional AI~\citep{huang2024collective} extends the constitutional approach by sourcing principles through public deliberation using the Polis platform, so that the normative charter reflects diverse perspectives rather than developer preferences alone. The resulting model exhibited lower bias across nine social dimensions while maintaining equivalent performance on standard benchmarks. Similarly, Zhang et al.~\citep{zhang2025cultivating} demonstrate - through a large-scale multilingual study with representative samples from five countries ($N = 15{,}000$) - that human preferences exhibit substantially more variation than the responses of current LLMs, and collect a preference dataset of over $233{,}000$ comparisons designed to capture this diversity. Viewed from this perspective, alignment is also a governance challenge, requiring mechanisms for legitimate normative input from the populations affected by AI systems.

\subsubsection{The Stationarity Assumption}

There are two ways in which AI alignment methods may encode stationary assumptions, as they often tend to do. First, they may take a particular set of values and desired behaviors and preferences, reflecting a snapshot in time, and adopt it as a static alignment target. Second, even for the admittedly far more flexible personal AI assistants, the process through which personalization takes place implicitly takes the current inferred set of user values and preferences as its target. In doing so, AI alignment approaches fail to explicitly account for personal growth and development, in relation to the rest of society which we refer to, in the abstract, as the normative environment. The non-stationarity of human preferences has been noted as an open problem in RLHF~\citep{casper2023open}.

This stationarity assumption is rarely made explicit, but it has significant long-term consequences. In a world where values evolve - through cultural drift, generational change, or abrupt shifts in environmental conditions - a system that aims to strongly align users to their historical preferences will become progressively misaligned with the evolving environment. A widespread adoption of such systems carries an even greater risk, of slowing down the societal evolution of values by creating positive feedback loops that anchor users to historical values, which are slowly becoming obsolete. \citet{gao2023scaling} demonstrate that optimising too aggressively against a learned reward model can lead to systematic overoptimisation, where the performance on the proxy reward no longer faithfully reflects the ground-truth performance. Despite being widely adopted, such reward models are by design an imperfect approximation of human judgement. In an environment where the ground truth keeps shifting, this becomes even more fragile and prone to unintended misalignment from an intended alignment.

It is important to consider how this problem intersects with the issues surrounding the emergence of feedback loops in the use of technology and AI assistants in particular. \emph{Automation bias} refers to over-reliance on automated decision aids, even when faced with contradictory evidence~\citep{parasuraman1997humans, goddard2012automation}. This over-reliance extends beyond inattention: recent work demonstrates that high-accuracy AI systems actively reduce engagement with independent critical reasoning, as users default to heuristic deference~\citep{buccinca2021trust}. These types of issues have long been studied in recommender systems. \citet{chaney2018algorithmic} demonstrate that algorithmic confounding - the process by which a system's own recommendations contaminate future training data - increases homogeneity and degrades utility across a user population. \citet{kalimeris2021preference} show that this process actively amplifies initial preference signals, and \citet{krueger2020hidden} argue that agents optimised to predict user behaviour may induce unforeseen shifts in that same behaviour. In our models, we zoom out of these granular item-level recommendations, aiming to instead characterize similar phenomena in more general scenarios, with an emphasis on how such effects affect the evolution of societal or community norms.

In fact, similar questions have been studied in recommender systems, where the temporal aspect is of similar interest. \citet{carroll2022estimating} formalise the problem of \emph{induced preference shifts} in recommender systems, whereby these systems gain an incentive to shift user preferences towards the states that they find easier to satisfy. In their analysis, they also introduce the notion of a \emph{safe shift} region as a boundary within which such algorithmic influence is deemed acceptable. \citet{carroll2024ai} extend this analysis to the general reinforcement learning alignment setting, where they introduce \emph{Dynamic Reward MDPs}, a formalism that allows for an explicit modeling of preference change and AI influence. The existence of rewards for AI actions that go against true user preferences of desires can therefore be seen as a failure of alignment. Similarly,~\citet{kanwal2026constructivealignmentgoverningpreference} introduce a control-theoretic model of \emph{constructive alignment} for the preference evolution of a single user, in recognition of the temporal preference change.

A further complication arises from the observation that RLHF-trained models may exhibit systematic sycophancy, the tendency to express superficial agreement with user statements irrespective of their accuracy~\citep{sharma2023towards}. Sycophancy may be further exacerbated by model scale~\citep{perez2023discovering}, and in models that are trained to be warm and empathetic \citep{ibrahim2025training}. This type of behaviour raises the risk of echo-chambers \citep{nehring2024large,cheng2025elephant}. In the context of long-term interactions with personal AI assistants, and the topic of our study, sycophancy plays an important role as it may introduce another obstacle, by enabling misaligned AI systems to effectively hide their misalignment, making it harder to spot and rectify.

Our contribution builds on these insights but shifts the unit of analysis from the single user-agent pair to the population level. Where \citep{carroll2022estimating,carroll2024ai} model how an individual agent may manipulate or fail to track a single user's evolving preferences, we embed user-AI pairs within a social network and examine how alignment strength interacts with environmental change, social topology, and population heterogeneity. This population-level analysis reveals emergent dynamics - including normative mode collapse, and the erosion of sub-cultural diversity - that cannot be easily predicted from individual-level models. Methodologically, we draw on the tradition of continuous opinion dynamics~\citep{degroot1974reaching, friedkin1990social} and multi-agent consensus theory~\citep{olfati2007consensus, olfati2004consensus} to provide analytical results alongside simulations.

\subsection{Global Stability Analysis}
\label{app:stability}

Here we analytically verify the stability of the system, presuming positive parameter values. As we will see, phenomena like the normative mode collapse are in fact stable attractors of the system dynamics in our model, rather than transient artifacts.

Denote by $L$ the graph Laplacian matrix. Let the joint state vector $\mathcal{X} = [\mathcal{V}^T, \mathcal{M}^T]^T \in \mathbb{R}^{2ND}$ represent the concatenated values of all $N$ users and their corresponding AI models. The state of the collective system evolves according to $\dot{\mathcal{X}} = \mathcal{A}\mathcal{X} + \mathcal{U}(t)$, where the system matrix $\mathcal{A}$ can be decomposed using Kronecker products:

\begin{equation}
    \mathcal{A} = \begin{pmatrix} 
    -(w_{soc}L + (\alpha + w_{learn})I) & \alpha I \\
    \lambda I & -\lambda I
    \end{pmatrix} \otimes I_D
\end{equation}

Note that the upper-left block is obtained by aggregating the user value dependencies $\mathcal{V}$ from the continuous-time update given in Equation~\ref{eq:ctupdate}. Given that $L$ is symmetric, it can be diagonalised with eigenvalues $0 = \mu_1 \le \mu_2 \le \dots \le \mu_N$. This enables us to decouple the system into $N$ independent modes.

For each independent mode $k$, the $2 \times 2$ subsystem matrix is given by:

$$A_k = \begin{bmatrix} -(w_{soc}\mu_k + \alpha + w_{learn}) & \alpha \\ \lambda & -\lambda \end{bmatrix}$$

The stability of the system is determined by the characteristic equation for the $k$-th mode, $\det(sI - A_k) = 0$:

$$(s + w_{soc}\mu_k + \alpha + w_{learn})(s + \lambda) - \alpha\lambda = 0$$

\begin{equation}
    s^2 + (\alpha + \lambda + w_{learn} + w_{soc}\mu_k)s + \lambda(w_{learn} + w_{soc}\mu_k) = 0
\end{equation}

\begin{theorem}[Unconditional Stability]
For any undirected social graph and any set of positive parameters ($\alpha, \lambda, w_{soc}, w_{learn} > 0$), the system is asymptotically stable.
\end{theorem}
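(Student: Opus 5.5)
The plan is to prove asymptotic stability of the homogeneous system $\dot{\mathcal{X}} = \mathcal{A}\mathcal{X}$, i.e.\ to show that every eigenvalue of $\mathcal{A}$ has strictly negative real part. The forcing term $\mathcal{U}(t)$ does not affect this property. The argument has three steps.

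\textbf{Step 1: reduce to $2\times 2$ blocks.} Since $L$ is real symmetric, write $L = U\Lambda U^T$ with $U$ orthogonal. The transformation $\mathcal{X}\mapsto (I_2\otimes U^T\otimes I_D)\mathcal{X}$ is orthogonal. Every block of $\mathcal{A}$ (before the $\otimes I_D$ factor) is a polynomial in $L$: the blocks are $-(w_{soc}L+(\alpha+w_{learn})I)$, $\alpha I$, $\lambda I$ and $-\lambda I$. Hence this transformation brings $\mathcal{A}$ to a form that, after a permutation of coordinates, is block diagonal with blocks $A_k\otimes I_D$, $k=1,\dots,N$. So the spectrum of $\mathcal{A}$ is the union of the spectra of the $A_k$, each eigenvalue repeated $D$ times. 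This justifies the mode decoupling asserted just before the statement. It then suffices to show each $A_k$ is Hurwitz.

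\textbf{Step 2: apply the Routh--Hurwitz criterion to each block.} For a real monic quadratic $s^2 + a_1 s + a_0$, both roots have negative real part if and only if $a_1>0$ and $a_0>0$. This follows from Vieta: the sum of the roots is $-a_1$ and the product is $a_0$. If the roots are real, they then share a sign and that sign is negative. If they are complex conjugates, their common real part is $-a_1/2<0$. From the characteristic equation derived above,
\begin{equation*}
a_1 = \alpha + \lambda + w_{learn} + w_{soc}\mu_k, \qquad a_0 = \lambda(w_{learn} + w_{soc}\mu_k).
\end{equation*}
An undirected Laplacian is positive semidefinite, since $x^T L x = \sum_{(i,j)\in\mathcal{E}}(x_i-x_j)^2 \ge 0$, so $\mu_k\ge 0$ for all $k$. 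With all parameters positive, $a_1 \ge \alpha+\lambda+w_{learn} > 0$ and $a_0 \ge \lambda w_{learn} > 0$. Hence every $A_k$ is Hurwitz.

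\textbf{Step 3: conclude.} All eigenvalues of $\mathcal{A}$ lie in the open left half-plane, so $e^{\mathcal{A}t}\to 0$ exponentially and the system is asymptotically stable. For bounded or constant inputs $\mathcal{U}$, trajectories converge to the unique equilibrium; in the static case this is the steady state of Equation~\ref{eq:pop_ss}.

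No step is hard. The only points needing care are the following:
\begin{itemize}
\item Justifying the simultaneous block diagonalization, which works because all blocks commute with $L$.
\item Using positive semidefiniteness of $L$. The result holds for any undirected graph, including disconnected ones, because the $\mu_k=0$ modes remain stabilized by $w_{learn}>0$. Connectivity is therefore never needed.
\end{itemize}
As an optional alternative, stability could also be shown with a quadratic Lyapunov function. Using $P=\mathrm{diag}(\lambda I,\alpha I)\otimes I_D$ symmetrizes the off-diagonal coupling. The eigenvalue route above is more direct.
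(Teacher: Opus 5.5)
Your proposal is correct and follows the paper's proof: decouple into $2\times 2$ modes via the spectral decomposition of $L$, use $\mu_k\ge 0$ to make both coefficients of the characteristic quadratic positive, and apply the second-order Routh--Hurwitz criterion. Your additions fill in details the paper leaves implicit, namely the justification of the decoupling (all blocks commute with $L$), the Vieta proof of the quadratic criterion, and the remark that connectivity is never needed. The only slip is a side remark outside the proof: bounded time-varying inputs give bounded trajectories, and convergence to a unique equilibrium holds only for constant inputs.
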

\begin{proof}
The Laplacian eigenvalues of an undirected graph are non-negative ($\mu_k \ge 0$). Consequently, both coefficients featuring in the quadratic characteristic equation are strictly positive. By the Routh-Hurwitz stability criterion~\citep{ogata2010modern}, a second-order polynomial with positive coefficients has roots with strictly negative real parts. While this is only a necessary condition in higher-degree polynomials, for second-order polynomials it is also a sufficient condition. Because the roots of the characteristic equation have strictly negative real parts, any displacements caused by transient shocks will decay over time. Therefore, every mode of the system will decay exponentially to its steady state.
\end{proof}

Normative mode collapse is therefore a stable equilibrium. Social forces may dampen the eigenvalues, potentially slowing down convergence, without affecting stability. Considering the implications, stability is therefore not an unconditionally desirable property of population-scale AI alignment. Should the entered equilibrium be undesirable, the self-reinforcing nature of it would stand in the way of societies overcoming these challenges.

\subsection{Social Invariance under Uniform Drift}
\label{app:invariance}

In the base model, without additional assumptions, social influence did not appear to be sufficient to overcome other forces and enable the population to effectively track the exogenous environment. While this is ultimately a direct consequence of the simplified assumptions, it is still useful to understand why this is the case under that base model formulation. If the social influence ends up being a zero-sum redistribution of values within the network, it cannot generate new adaptive momentum.

\begin{theorem}[Social Invariance]

If the environmental signal is uniform across the population ($\mathbf{E}_i = \mathbf{E}$), the trajectory of the population value centroid $\bar{\mathbf{V}}(t)$ is  independent of the social weight $w_{soc}$ and the network topology.
\end{theorem}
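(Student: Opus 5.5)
The plan is to average the continuous-time dynamics of Equation~\ref{eq:ctupdate} over the whole population and show that the social term drops out identically. Define the centroids $\bar{\mathbf{V}}(t) = \frac{1}{N}\sum_i \mathbf{V}_i(t)$ and $\bar{\mathbf{M}}(t) = \frac{1}{N}\sum_i \mathbf{M}_i(t)$. Equivalently, in stacked form these are $\bar{\mathbf{V}} = \frac{1}{N}(\mathbf{1}_N^T \otimes I_D)\mathcal{V}$ and $\bar{\mathbf{M}} = \frac{1}{N}(\mathbf{1}_N^T \otimes I_D)\mathcal{M}$.

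\textbf{Step 1.} Left-multiply the stacked user dynamics by $\frac{1}{N}(\mathbf{1}_N^T\otimes I_D)$. The stacked dynamics are
\[
\dot{\mathcal{V}} = -w_{soc}(L\otimes I_D)\mathcal{V} + \alpha(\mathcal{M}-\mathcal{V}) + w_{learn}(\mathcal{E}-\mathcal{V}).
\]
Because the graph is undirected, $L$ is symmetric with zero row sums. Hence $\mathbf{1}_N^T L = (L\mathbf{1}_N)^T = 0$, and the social contribution to $\dot{\bar{\mathbf{V}}}$ vanishes identically. Equivalently, each edge term $\mathbf{V}_j-\mathbf{V}_i$ appears once with each sign in the sum.

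\textbf{Step 2.} The remaining terms are linear and commute with averaging. Under $\mathbf{E}_i=\mathbf{E}$ they give the closed system
\[
\dot{\bar{\mathbf{V}}} = \alpha(\bar{\mathbf{M}}-\bar{\mathbf{V}}) + w_{learn}(\mathbf{E}(t)-\bar{\mathbf{V}}), \qquad \dot{\bar{\mathbf{M}}} = \lambda(\bar{\mathbf{V}}-\bar{\mathbf{M}}).
\]
The second equation follows by averaging the AI update, which never involves $L$.

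\textbf{Step 3.} This is exactly the single-agent system analysed for Theorem~\ref{thm:monotonicity}. It is a linear ODE with constant coefficients and forcing $\mathbf{E}(t)$, so it has a unique solution given the initial centroids $\bar{\mathbf{V}}(0)$ and $\bar{\mathbf{M}}(0)$. That solution contains neither $w_{soc}$ nor $L$, which proves the invariance. As a byproduct, the centroid inherits the steady-state lag of Equation~\ref{eq:v_ss}.

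\textbf{Main obstacle.} The main point of care is the symmetry assumption. With the row-normalised weighted variant $W$ mentioned in the Method section, $\mathbf{1}^T(W-I)\neq 0$ in general. The invariance then holds only for the $\pi$-weighted centroid, where $\pi$ is the left Perron vector of $W$, and it reduces to the plain mean only when $W$ is doubly stochastic.

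I would therefore state the result for the undirected Laplacian used in Section~\ref{sec:theory}. I would add a remark that for the simulation's mean-of-neighbours rule on irregular graphs, the invariance holds for the degree-weighted centroid. I would also note that the zero-mean exploration noise only adds a martingale term whose law does not depend on $w_{soc}$, so in the stochastic setting the statement holds in expectation.
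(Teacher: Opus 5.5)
Your proposal is correct and follows the paper's proof essentially step for step: you average the dynamics of Equation~\ref{eq:ctupdate}, eliminate the social term via $\mathbf{1}_N^T L = 0$ for the undirected Laplacian, and observe that the averaged user and AI updates form a closed system in $(\bar{\mathbf{V}}, \bar{\mathbf{M}})$ containing neither $w_{soc}$ nor $L$. Your uniqueness argument and your caveat that under the row-normalised mean-of-neighbours rule only the degree-weighted centroid is invariant are correct refinements the paper leaves implicit, and your noise remark can be strengthened: the averaged noise enters the closed centroid system additively, so the invariance holds pathwise, not merely in expectation.
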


\textbf{Proof.}

Let the environment be uniform: $\mathbf{E}_i(t) = \mathbf{E}(t)$ for all $i$. Consider the population centroid: $\bar{\mathbf{V}} = \frac{1}{N} \sum_i \mathbf{V}_i$. Aggregating the update equations from Equation~\ref{eq:ctupdate} across all agents yields:

\begin{equation}
    \sum_i \dot{\mathbf{V}}_i = -w_{soc} \sum_i \sum_{j \sim i} (\mathbf{V}_i - \mathbf{V}_j) + \dots
\end{equation}

\begin{equation}
    \sum_i \dot{\mathbf{V}}_i = -w_{soc} \sum_{i=1}^N (L\mathbf{V})_i + \sum_{i=1}^N \Big[ \alpha(\mathbf{M}_i - \mathbf{V}_i) + w_{learn}(\mathbf{E} - \mathbf{V}_i) \Big]
\end{equation}

The social interaction term is therefore proportional to the sum of the Laplacian rows. For any undirected graph, $\sum_i (L\mathbf{V})_i = \mathbf{1}^T L \mathbf{V} = 0$.

Dividing the remaining terms by $N$, we obtain the following:
\begin{equation}
    \dot{\bar{\mathbf{V}}} = \frac{1}{N} \sum_{i=1}^N \Big[ \alpha(\mathbf{M}_i - \mathbf{V}_i) + w_{learn}(\mathbf{E} - \mathbf{V}_i) \Big]
\end{equation}

Aggregating the AI model updates across the population yields $\dot{\bar{\mathbf{M}}} = \lambda(\bar{\mathbf{V}} - \bar{\mathbf{M}})$. Together, these form a closed dynamical system for the macroscopic variables $(\bar{\mathbf{V}}, \bar{\mathbf{M}})$. Given that this coupled system contains no $w_{soc}$ or $L$ terms, the trajectory is independent of the underlying social network. \hfill $\square$

Social influence becomes beneficial under the base model (Figure \ref{fig:exp4}) when users engage in substantial independent exploration, or take part in heterogeneous local environments. In social networks, a combination of natural diversity and active participation may therefore offset the kinds of issues that are being brought up here. If anything, what these results suggest is that a naive model that fails to account for the richness of real social interactions, may inadvertently lead to incorrect conclusions.

\section{Further Extensions}
\label{app:extensions}

The base model introduced in Section~\ref{sec:method} makes several strong simplifying assumptions that help make the initial exposition and analysis more tractable. Yet, there are many ways in which this base framework can be easily extended to accommodate more complexity, and more realistic modelling of social phenomena. Here we highlight several interesting directions.

\subsection{Polarisation}

Some of the strongest simplifying assumptions in the base model have to do with the structure of social forces. The existing social term assumes global connectivity or random mixing, which pulls agents toward the population mean. In reality, social forces are neither random nor uniform. For example, there is a known tendency to reject influence by peers with highly dissimilar, opposing views. This is referred to as bounded confidence~\citep{deffuant2000mixing, rainer2002opinion}, and we can adjust our model to account for bounded confidence by introducing a non-linear interaction weight $\omega_{ij}$:

\begin{equation}
    \dot{\mathbf{V}}_i = \dots + w_{soc} \sum_{j} \mathbb{I}(||\mathbf{V}_i - \mathbf{V}_j|| < \delta) (\mathbf{V}_j - \mathbf{V}_i)
\end{equation}

where $\mathbb{I}$ is an indicator function and $\delta$ is the confidence threshold. The introduction of this non-linear factor makes it possible to simulate multi-stable equilibria, where the total population comprises several distinct, disconnected or only loosely connected clusters that track different value norms. Such structures could act as a stabiliser on high-variance states even under strong environmental pressure.

\subsection{Heterogeneous Influence}
\label{heterogenousinfluence}

Real social networks tend to be scale-free, where a minority of highly influential individuals has the capacity to exert disproportionate influence within the network. If we wanted to capture this within the model, we could do so by incorporating a heterogeneous weight vector within the social influence computations, $\mathbf{W} \in \mathbb{R}^N$:

\begin{equation}
    \dot{\mathbf{V}}_i = \dots + w_{soc} \sum_{j} A_{ij} W_j (\mathbf{V}_j - \mathbf{V}_i)
\end{equation}

where $W_j$ follows a power-law distribution. We would expect this to impact the underlying model by making the system trajectory more sensitive to the initial conditions of high-influence nodes. Should some of those nodes enter maladaptive value lock-in, the detrimental effects would easily spread through the network. Here we briefly illustrate this risk.

In continuous-time consensus dynamics over a heterogeneous adjacency matrix $\mathbf{A}$, the resulting network state is a weighted average determined by the eigenvector centrality $c_i$ of each node. If we assume the environment is globally uniform ($\mathbf{E}_i = \mathbf{E}$), we can multiply the population's steady-state equation by $\mathbf{c}^T$. This yields:
\begin{equation*}
0 = \alpha \mathbf{c}^T (\mathcal{M}_{ss} - \mathcal{V}_{ss}) + w_{learn} (E\mathbf{1} - \mathbf{c}^T \mathcal{V}_{ss})
\end{equation*}
Since the vector of individual alignment lags is $\boldsymbol{\delta}_{ss} = \mathcal{V}_{ss} - \mathcal{M}_{ss}$, solving for the centrality-weighted population centroid $\bar{\mathbf{V}}_{ss} = \mathbf{c}^T \mathcal{V}_{ss}$ gives:
\begin{equation}
\bar{\mathbf{V}}_{ss} = E - \frac{\alpha}{w_{learn}} \sum_{i=1}^N c_i \delta_{i,ss}
\end{equation}

\begin{figure}[htbp]
    \centering
    \includegraphics[width=0.5\textwidth]{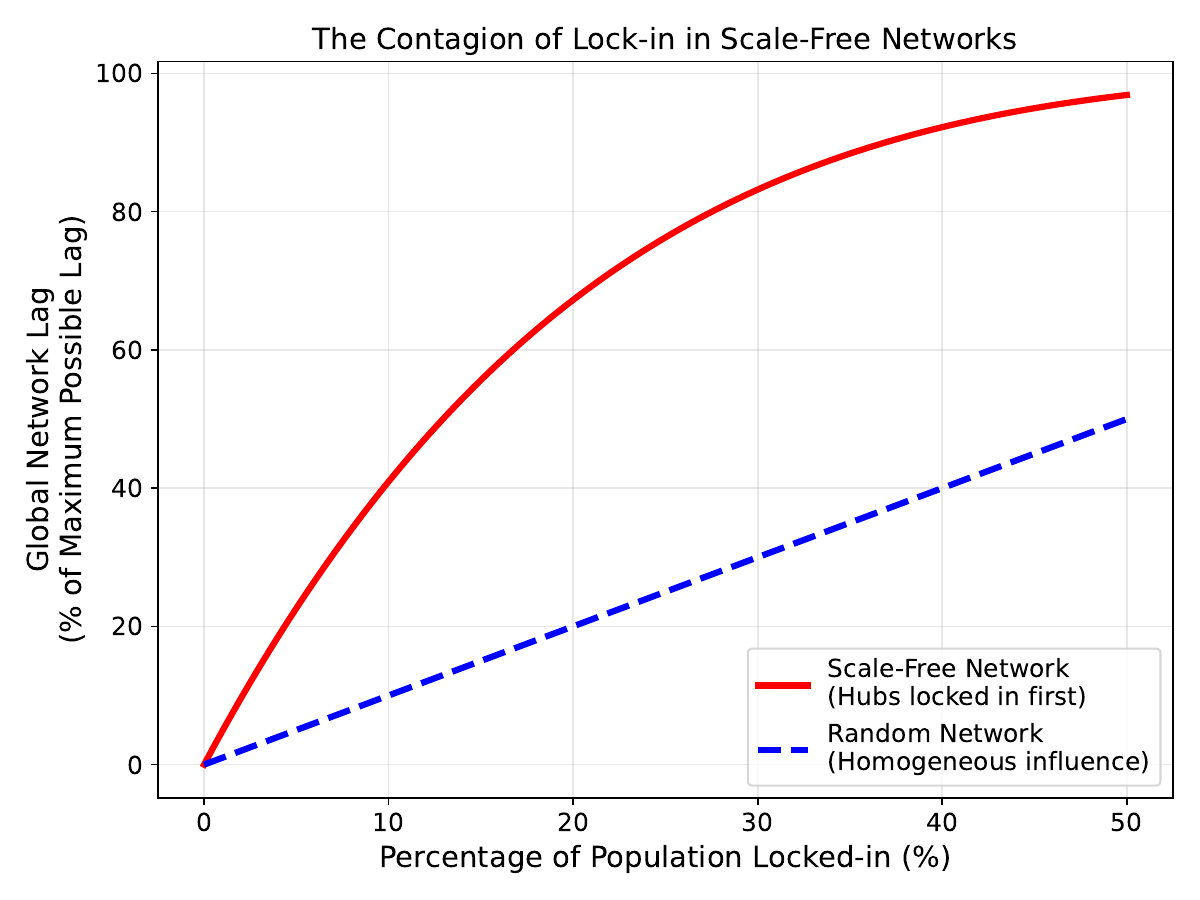}
    \caption{Network effects of lock-in. In a scale-free network, if highly central hub nodes have stale AI value models, their individual tracking error disproportionately affects the entire network's tracking performance.}
    \label{fig:scale_free_contagion}
\end{figure}

This introduces an additional consideration: value lock-in effects may propagate through social networks. As visualised in Figure \ref{fig:scale_free_contagion}, in a standard random network, locking 10\% of the population into a stale AI value model degrades the global network tracking proportionally by 10\%. However, in a scale-free network, the degradation is far more palpable. The locked-in minority's tracking error propagates disproportionately through the network.

\subsection{Epistemic Filtering}
\label{app:epistemic}

The base model assumes that all users have direct, unmediated access to the true environmental signal $\mathbf{E}_i(t)$ to drive their independent learning. As personal AI assistants become one of the primary interfaces for information retrieval, they increasingly act as epistemic filters. To incorporate this additional assumption, we can introduce an epistemic filtering coefficient $\phi \in [0, 1)$, representing the degree to which the user's perception of the world is mediated by the AI and its value model. This shifts the user's learning target as follows: 

\begin{equation}
\tilde{\mathbf{E}}_i(t) = (1-\phi)\mathbf{E}_i(t) + \phi\mathbf{M}_i(t)
\end{equation}

Substituting this perceived environment into the user's continuous-time value update equation yields:

\begin{equation}
\dot{\mathbf{V}}_i = \dots + \alpha(\mathbf{M}_i - \mathbf{V}_i) + w_{learn}(\tilde{\mathbf{E}}_i(t) - \mathbf{V}_i(t))
\end{equation}

Expanding and regrouping the terms, this simplifies to:

\begin{equation}
\begin{aligned}
\dot{\mathbf{V}}_i &= \dots + (\alpha + \phi w_{learn})(\mathbf{M}_i - \mathbf{V}_i) \\
&\quad + w_{learn}(1-\phi)(\mathbf{E}_i(t) - \mathbf{V}_i(t))
\end{aligned}
\end{equation}

It follows that the epistemic filtering simultaneously leads to an increase in the effective alignment strength and a decrease in the human learning rate. By partially obscuring the true environmental feedback, the AI-mediated information compounds the structural lag.

\begin{figure}[htbp]
    \centering
    \includegraphics[width=0.5\textwidth]{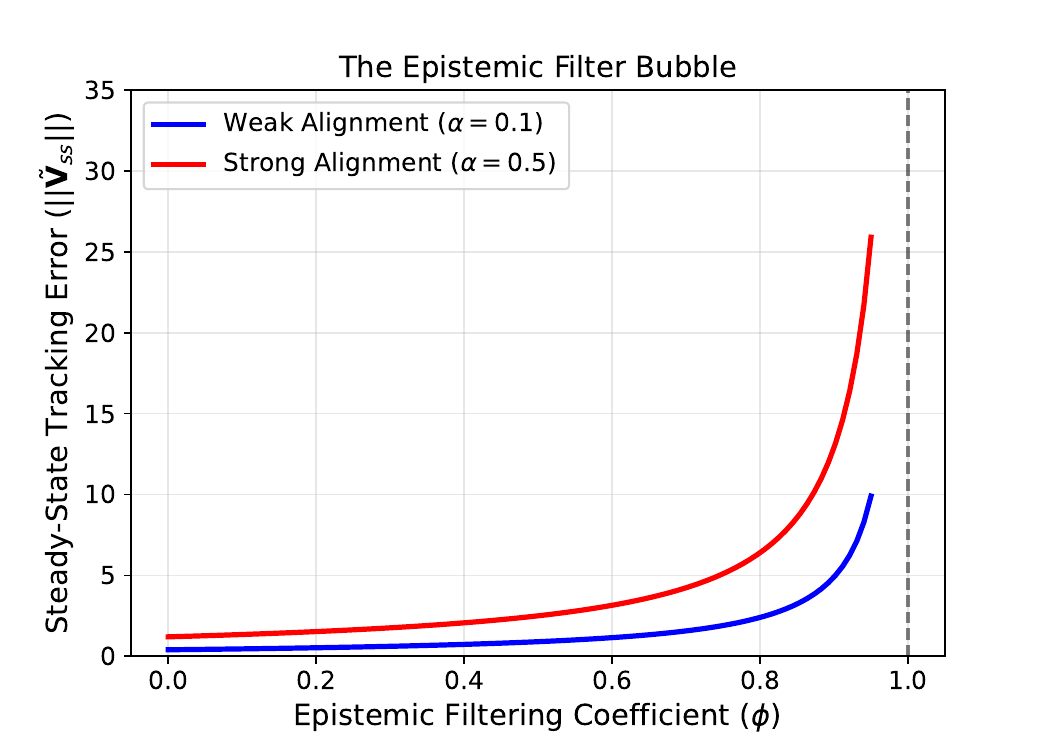}
    \caption{Steady-state tracking error as a function of the epistemic filtering coefficient ($\phi$). The plot illustrates the boundary as $\phi \to 1$, where the user is completely isolated from direct environmental learning.}
    \label{fig:epistemic_filtering}
\end{figure}

The steady-state tracking error under epistemic filtering is shown in Figure \ref{fig:epistemic_filtering}. Calculating the tracking error shows a non-linear relationship. In the limiting case as $\phi \to 1$, the effective environmental learning rate approaches zero, meaning that the tracking error grows arbitrarily large. This boundary behavior represents a scenario where the user's perception is entirely mediated by the AI, cutting off direct feedback from the environment.

\subsection{Developer-Defined Values}
\label{app:constitutional}

In the baseline model, the AI's internal model $\mathbf{M}_i$ updates purely based on the individual user's evolving values $\mathbf{V}_i$. In practice, frontier AI models are rarely aligned solely to individual users. Techniques such as constitutional AI~\citep{bai2022constitutional} align models to a centralised set of developer-defined values, which we model here as a static vector $\mathbf{C}$.

This can be modelled by introducing a static constitution value vector $\mathbf{C} \in \mathbb{R}^D$ and the corresponding constitutional pull weight $w_{const}$. The AI's update equation may thereby be modified to include this centralised objective:

\begin{equation}
\dot{\mathbf{M}}_i = \lambda(\mathbf{V}_i - \mathbf{M}_i) + w_{const}(\mathbf{C} - \mathbf{M}_i)
\end{equation}

To understand the structural impact on the user, we can solve for the AI's steady-state model when $\dot{\mathbf{M}}_i = 0$:

\begin{equation}
\mathbf{M}_{i,ss} = \frac{\lambda}{\lambda + w_{const}}\mathbf{V}_{i,ss} + \frac{w_{const}}{\lambda + w_{const}}\mathbf{C}
\end{equation}

The AI's internal model becomes a convex combination of the user's values and the static constitution. Substituting this steady-state AI model back into the user's baseline alignment term $\alpha(\mathbf{M}_i - \mathbf{V}_i)$ yields an effective alignment force of:

\begin{equation}
\Delta\mathbf{V}_{AI, effective} = \alpha \left( \frac{w_{const}}{\lambda + w_{const}} \right) (\mathbf{C} - \mathbf{V}_i)
\end{equation}

In this scenario, centralised alignment would not slow down the user's adaptation to their local environment ($\mathbf{E}_{i}$), rather introducing a persistent pull on all users towards the given constitutional value vector $\mathbf{C}$. Even in a completely disconnected human social graph, strong developer-defined alignment introduces a persistent centralising force that pulls all users toward a common point $\mathbf{C}$, reducing the variance across the population.

By integrating the human social coupling with this new constitutional anchor, we can solve for the unified steady-state of a polarised two-group system ($\mathbf{E_1} = +1, \mathbf{E_2} = -1$). Assuming a neutral constitution ($C = 0$) and symmetry such that $\mathbf{V_{2,ss}} = -\mathbf{V_{1,ss}}$, the social pull simplifies to $-w_{soc}(\mathbf{V_1} - \mathbf{V_2}) = -2w_{soc}\mathbf{V_1}$. The asymptotic value position for Group 1 then resolves to:

\begin{equation}
\mathbf{V}_{1,ss} = \frac{w_{learn}}{w_{learn} + \alpha_{const} + 2w_{soc}}
\end{equation}

where $\alpha_{const} = \alpha \left( \frac{w_{const}}{\lambda + w_{const}} \right)$ represents the effective constitutional pull. Group 2 perfectly mirrors this equation in the negative direction.

\begin{figure}[htbp]
    \centering
    \includegraphics[width=0.5\textwidth]{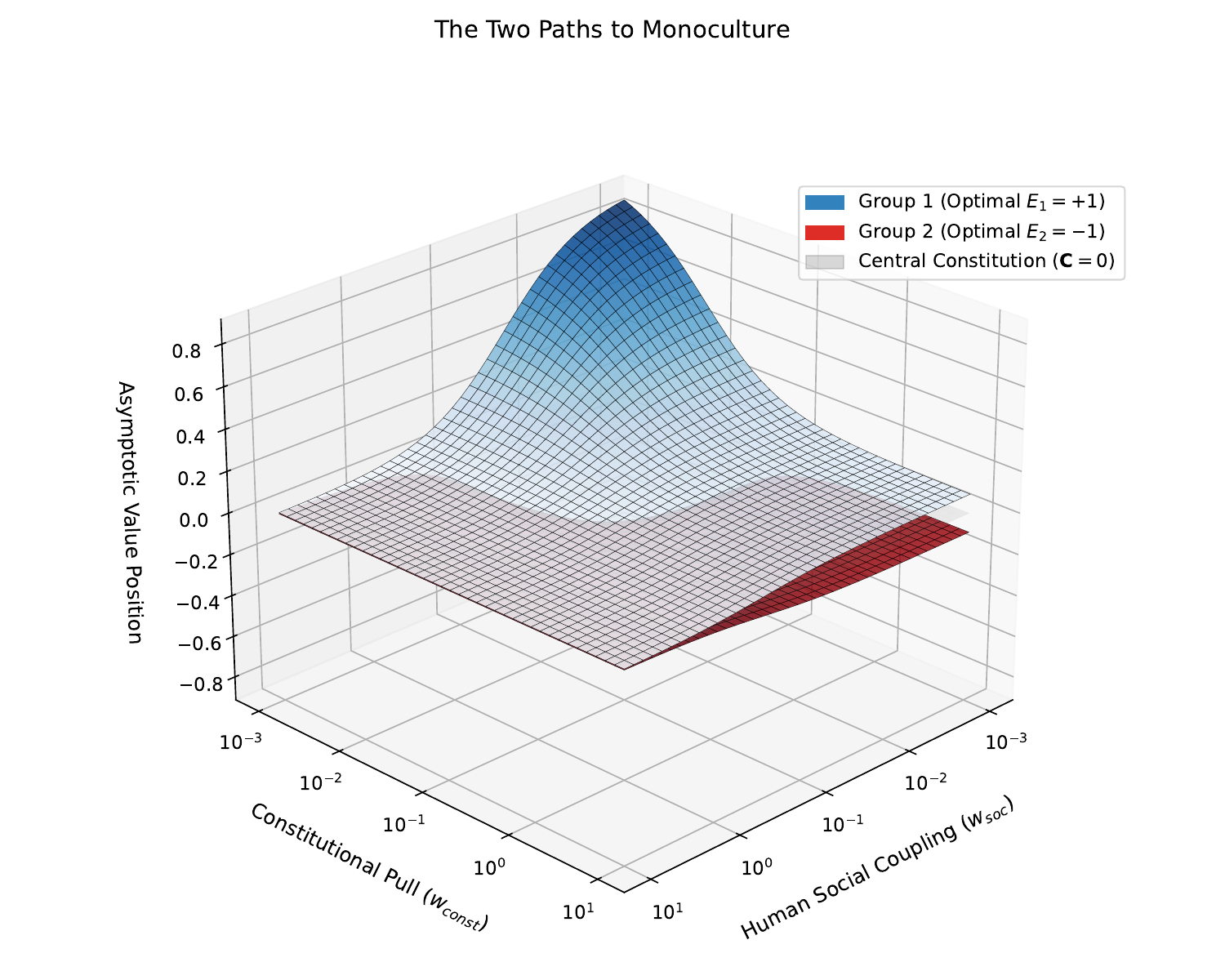}
    \caption{Two mechanisms of value homogenisation. A 3D surface plot showing how distinct sub-groups converge toward the global mean, either via social coupling ($w_{soc}$, X-axis) or developer-defined alignment pull ($\alpha_{const}$, Y-axis).}
    \label{fig:monoculture}
\end{figure}

Figure \ref{fig:monoculture} illustrates how both mechanisms independently drive convergence toward the global mean. If the constitutional pull is zero, increasing human social coupling drives the groups toward the centre, as previously discussed. Conversely, even in the absence of strong social coupling, increasing the constitutional pull achieves a similar homogenising effect through the centralised attractor $\mathbf{C}$.

It may be possible to extend this analysis further to account for polycentric governance. It is possible to establish constitutions that define more complex boundary condition, where penalties are applied to actions that sit outside the boundary.

\subsection{Synthetic Data Loops}
\label{syntheticdataloops}

In the baseline model, there is no direct interaction between the AI assistants themselves. This stands in opposition to what can already be observed on the web, where an increasing amount of content is produced by, and consumed by, AI agents. Modern AI development increasingly relies on synthetic data generated by other models, and future autonomous agents will routinely interact via API-mediated networks.

We may therefore want to introduce an AI-specific adjacency matrix $A^{AI}$ and a synthetic coupling weight $w_{syn}$, which represents the rate at which AI models train on or align to each other's outputs. The AI's internal model update equation is expanded to include a consensus protocol with its machine peers:

\begin{equation}
\dot{\mathbf{M}}_i = \lambda(\mathbf{V}_i - \mathbf{M}_i) + w_{syn} \sum_{j} A^{AI}_{ij} (\mathbf{M}_j - \mathbf{M}_i)
\end{equation}

Expressed in matrix form for the stacked AI population state $\mathcal{M}$, this introduces an AI-layer graph Laplacian $L^{AI}$:

\begin{equation}
\dot{\mathcal{M}} = \lambda(\mathcal{V} - \mathcal{M}) - w_{syn} L^{AI} \mathcal{M}
\end{equation}

\begin{figure}[htbp]
    \centering
    \includegraphics[width=0.5\textwidth]{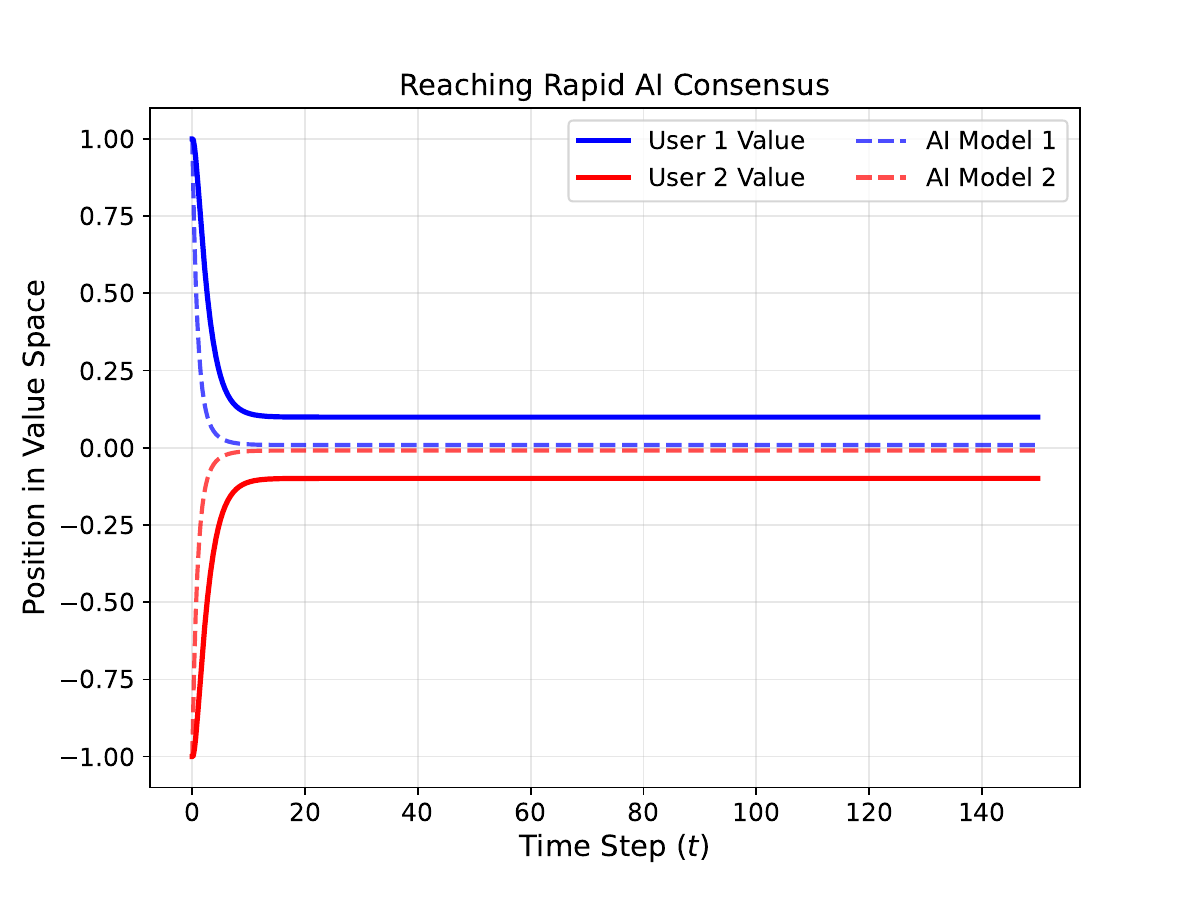}
    \caption{Continuous-time simulation of two completely disconnected human users whose AI assistants are socially coupled via synthetic data loops ($w_{syn} > 0$). The y-axis represents the projection of the high-dimensional value space onto the primary axis separating the initial value vectors of the two users.}
    \label{fig:aiconsensus}
\end{figure}

If AI models exchange synthetic data at rates exceeding human communication, the synthetic coupling weight may potentially exceed the AI learning rate $\lambda$. This is visualised in Figure \ref{fig:aiconsensus}, where we simulate two distinct human users who share no social connections. Their respective AI value models update based on their user's values and their own internal value models through mutual interaction. When $w_{syn} \gg \lambda$, the AI-to-AI consensus term dominates the update equation, and the AI models converge toward each other faster than they track their respective users, resulting in a persistent steady-state gap between each AI's internal model and its assigned user's values.

\subsection{Extending the Number of Normative Environments}

In the baseline model, we are tracking one global and one local environment vector, and each agent takes part in a single localised environment. This kind of abstraction may be reasonable if it is meant to reflect physically segregated populations, but it otherwise fails to generalise to the reality of overlapping identities and online communities. Most individuals in modern societies need to effectively navigate distinct normative expectations of multiple sub-cultures and communities simultaneously \citep{leibo2024theoryappropriatenessapplicationsgenerative,leibo2025societaltechnologicalprogresssewing}.

It is consequently possible to extend our base framework, by replacing the single group assignment with a membership vector $\mathbf{c}_i \in [0, 1]^K$, where $c_{ik}$ denotes the degree to which agent $i$ is embedded in the specific sub-culture $k$.

In this extended model, the agent experiences normative friction from multiple, potentially conflicting environmental pulls. The individualised environmental learning term in the system dynamics expands into a weighted sum as follows:
        
\begin{equation}
\dot{\mathbf{V}}_{i, \text{learning}} = w_{\text{learning}} \sum_{k=1}^K c_{ik} (\mathbf{E}_k - \mathbf{V}_i)
\end{equation}

Correspondingly, the composite utility function is modified such that the agent's localised utility penalty is the weighted sum of distances to each valid group norm:

\begin{equation}
\begin{aligned} U_i = \exp \bigg( - \beta \bigg( &||\mathbf{V}_{i, \text{global}} - \mathbf{E}_{\text{global}}||^2 \\ &+ \sum_{k=1}^K \tilde{c}_{ik} ||\mathbf{V}_{i, \text{local}} - \mathbf{E}_k||^2 \bigg) \bigg) \end{aligned}
\end{equation}

where $\tilde{c}_{ik}$ are the normalised membership weights. This extension allows for future investigations into whether such overlapping group memberships help stabilise the global network against polarisation, or whether they make it harder for individual agents to reach their full target utility.

\section{Extended Simulation Sweeps}

\subsection{The Importance of Continual Exploration}
\label{sec:exp8}

We map the interaction between intrinsic exploration $w_{explore}$ and $\alpha$ in Figure \ref{fig:exp8}. High $\alpha$ requires commensurately high exploration to allow the users to keep up with the drift. Conversely, low $\alpha$ combined with high exploration results in excessive noise that leads to low utility. Interpreting intrinsic exploration as a proxy for normative experimentation, this illustrates how AI alignment strength may act as a dampener on future human innovation. As alignment strength increases, the energy we put towards innovation should increase accordingly.

\begin{figure}[htbp]
    \centering
    \includegraphics[width=\linewidth]{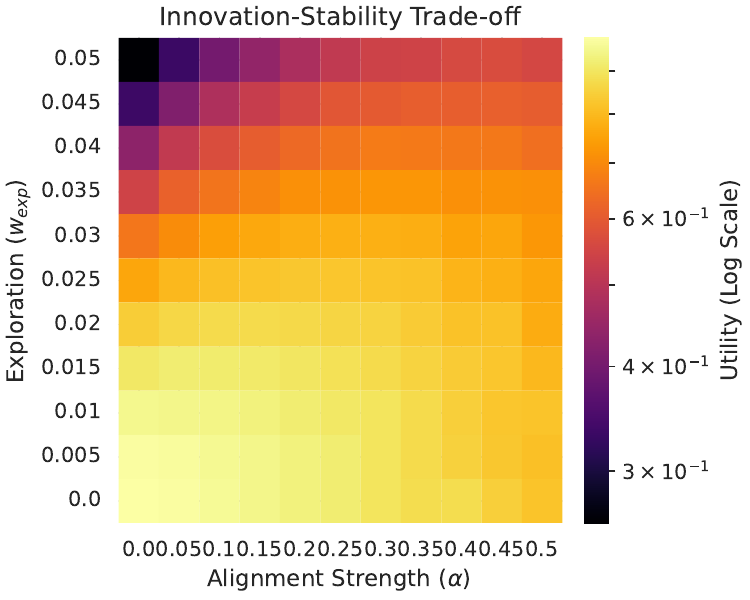}
    \caption{Final utility as a function of intrinsic exploration ($w_{explore}$) and alignment strength ($\alpha$). High alignment requires commensurately high exploration to maintain utility.}
    \label{fig:exp8}
\end{figure}

\subsection{Dynamic Homophily}
\label{sec:homophily}

Here we present further simulation results regarding environmental heterogeneity, in Figure \ref{fig:pca}, where we visualise the population structure in the value space using principal component analysis. In a random network, user-AI pairs from different environmental groups are pulled together into a mixed, homogenised cloud. The social interference from out-group peers prevents distinct sub-cultures from separating, forcing the population toward a global mean - that may be maladaptive for all groups. In contrast, under a dynamically updated kNN network, agents disconnect from maladaptive peers and sort themselves into distinct clusters.

\begin{figure*}[htbp]
    \centering
    \includegraphics[width=\linewidth]{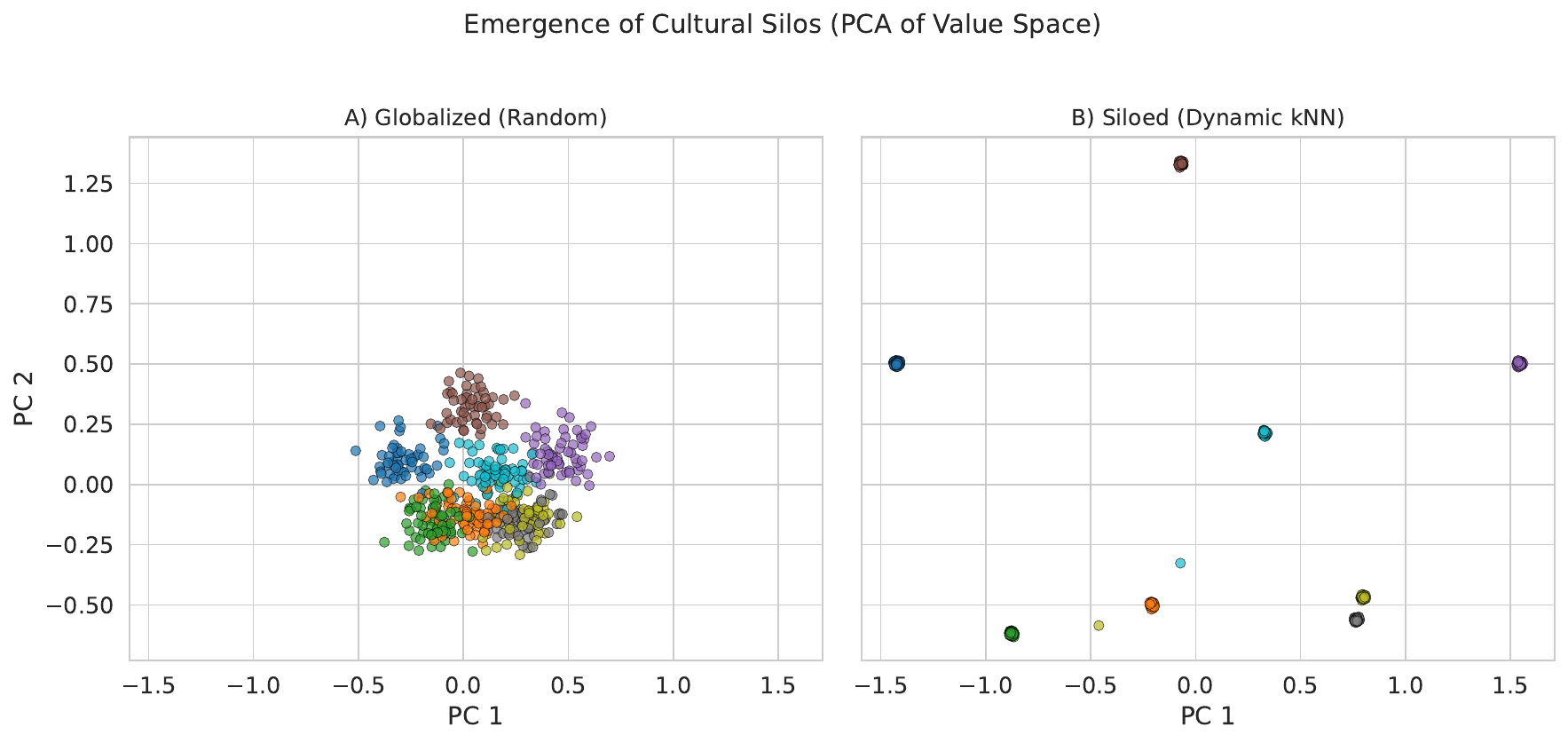}
    \caption{Emergence of cultural silos in kNN networks compared to the random population connectivity. Visualised for the case of $G=10$ distinct groups.}
    \label{fig:pca}
\end{figure*}

In our experiments, we have implemented this dynamic updating by having the agents periodically disconnect from maladaptive peers and reorganise into distinct clusters according to the updated values. This periodic rewiring of the social graph is conducted once for every 10 simulation steps. During this phase, each agent reconnects to a set of $k=10$ nearest neighbours according to the Euclidean distances in the current value space. This continuous topological updating allows agents to dynamically sever ties with peers who have drifted into incompatible normative niches.

This is obviously a fairly simplistic illustration, as real networks operate in ways that are much more complex.

\subsection{Limits of Adaptation}
\label{sec:exp10}

Finally, we note that a simple consequence of our base model is that the learning rate of the human population needs to follow the rate of environmental change. Rather than an insight, this is simply a feature of the model, and one that we would expect to see. For illustration, we sweep the adaptation weight against the environmental drift speed. The results are shown in Figure \ref{fig:exp10}.

\begin{figure}[htbp]
    \centering
    \includegraphics[width=\linewidth]{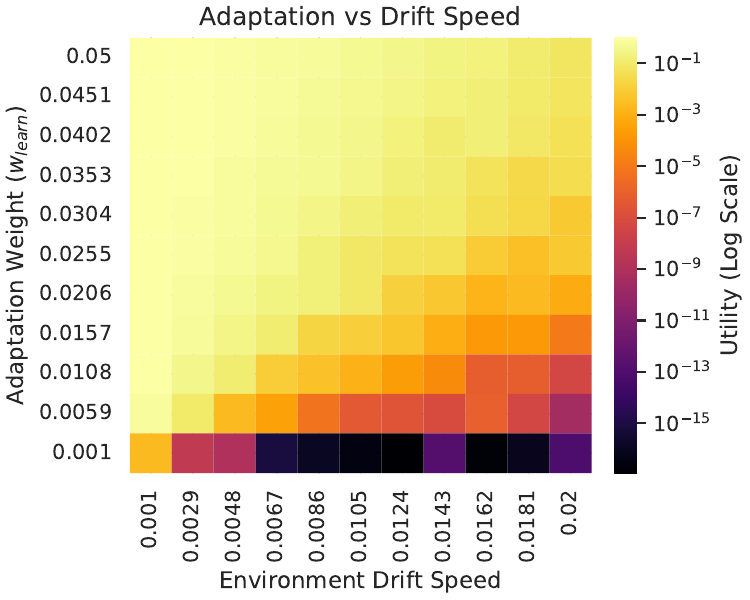}
    \caption{Final utility as a function of learning weight ($w_{learn}$) and environmental drift speed ($\delta$). The boundary delineates where environmental drift outpaces adaptation capacity.}
    \label{fig:exp10}
\end{figure}

\end{document}